\documentclass{article}

     \PassOptionsToPackage{numbers, compress}{natbib}


   \usepackage[preprint]{neurips_2020}



\usepackage[utf8]{inputenc} 
\usepackage[T1]{fontenc}    
\usepackage{hyperref}       
\usepackage{url}            
\usepackage{booktabs}       
\usepackage{amsfonts}       
\usepackage{nicefrac}       
\usepackage{microtype}      

\usepackage{microtype}
\usepackage{graphicx}
\usepackage{subfigure}
\usepackage{booktabs} 

\usepackage{hyperref}


\usepackage{graphicx} 
\usepackage{subfigure}


\usepackage{algorithm}
\usepackage{algorithmic}

\usepackage{amsmath}
\usepackage{amssymb}

\usepackage{amsthm}
\usepackage{wrapfig}
\usepackage{multicol}

\usepackage{hyperref}

\usepackage{xr}

\makeatletter
\newcommand{\rmnum}[1]{\romannumeral #1}
\newcommand{\Rmnum}[1]{\expandafter\@slowromancap\romannumeral #1@}
\makeatother
\newtheorem{theorem}{\textbf{Theorem}}
\newtheorem{lemma}{\textbf{Lemma}}

\newtheorem{definition}{\textbf{Definition}}

\newtheorem{remark}{\textbf{Remark}}

\newtheorem{claim}{Claim}

\title{A Data-Dependent Algorithm for Querying Earth Mover's Distance with Low Doubling Dimensions}

%

\author{%
  Hu Ding, Tan Chen, Fan Yang, Mingyue Wang \\
  School of Computer Science and Technology\\
  University of Science and Technology of China\\
  He Fei, China \\
  \texttt{huding@ustc.edu.cn, \{ct1997, yang208, mywang\}@mail.ustc.edu.cn} \\
}

\begin{document}

\maketitle

\begin{abstract}
In this paper, we consider the following query problem: given two weighted point sets $A$ and $B$ in the Euclidean space $\mathbb{R}^d$, we want to quickly determine that whether their earth mover's distance (EMD) is larger or smaller than a pre-specified threshold $T\geq 0$. The problem finds a number of important applications in the fields of machine learning and data mining. In particular, we assume that the dimensionality $d$ is not fixed and the sizes $|A|$ and $|B|$ are large. Therefore, most of existing EMD algorithms are not quite efficient to solve this problem due to their high complexities. Here, we consider the problem under the assumption that $A$ and $B$ have low doubling dimensions, which is common for  high-dimensional data in real world. Inspired by the geometric method {\em net tree}, we propose a novel ``data-dependent'' algorithm to avoid directly computing the EMD between $A$ and $B$, so as to solve this query problem more efficiently. We also study the performance of our method on synthetic and real datasets. The experimental results suggest that our method can save a large amount of running time comparing with existing EMD algorithms. 
  
\end{abstract}

\section{Introduction}
\label{sec-intro}

Given two weighted point sets $A$ and $B$ in the Euclidean space $\mathbb{R}^d$, the weight of each point in $A$ represents its {\em supply},  and the weight of each point in $B$ represents its {\em demand}. The {\em Earth Mover's Distance (EMD)}  is the minimum transportation cost from $A$ to $B$. We can build a complete bipartite graph $A\times B$ where each pair of points $(a_i, b_j)\in A\times B$ is connected by an edge with the weight being equal to their Euclidean distance (or some other specified distance); so the EMD between $A$ and $B$ can be obtained by computing the {\em minimum cost flow}~\cite{ahuja1993network} in the bipartite graph. Actually, the transportation problem is a discrete version of the {\em Monge-Kantorovich} problem that has been extensively studied in Mathematics~\cite{Villani08references}. EMD has many applications in real world. In particular, it has been widely used for computing the similarity between two patterns in pattern recognition and image retrieval problems~\cite{rubner2000earth,DBLP:conf/cvpr/GraumanD04,pele2009fast}. Most of these applications consider the EMD in terms of low dimensional patterns, such as 2D images and 3D shapes. In recent years, several important applications of EMD in high-dimensional Euclidean space have been studied in the fields of machine learning and data mining. Below, we introduce two examples briefly. 

\textbf{\rmnum{1}. Computing similarity between different datasets.} 
{\em Crowdsourcing} is an emerging topic in the big data era~\cite{DBLP:journals/sigkdd/LiGMLSZFH15,ZCZ}. We often receive different datasets from various sources and want to quickly estimate their values. For example, we can perform evaluation or classification on the received datasets by comparing them with our own reliable datasets. In practice, a dataset ({\em e.g.,} an image dataset) is often represented as a set of high-dimensional vectors, and therefore the task can be modeled as  computing the similarity  between two point sets in a \textbf{high-dimensional} Euclidean space (this is the key difference with the problem of computing the EMD between two images, which are in fact two \textbf{2D point sets}~\cite{cohen1999earth}). Similar applications also arise in biological network alignment~\cite{DBLP:conf/aaai/LiuDC017} and unsupervised cross-lingual learning~\cite{DBLP:conf/emnlp/ZhangLLS17}.

\textbf{\rmnum{2}. Domain adaptation.} 
In supervised learning, our task usually is to learn the knowledge from a given labeled training dataset. However, in many scenarios, labeled data could be very limited. We can generate the labels for an unlabeled dataset by exploiting an existing annotated dataset, that is, transfer the knowledge from a source domain to a target domain. Thus the problem is called ``domain adaptation'' in the field of transfer learning~\cite{DBLP:journals/tkde/PanY10}. Due to its importance to many machine learning applications, the problem has received a great amount of attention in the past years~\cite{DBLP:conf/nips/BlitzerCKPW07,DBLP:journals/ml/Ben-DavidBCKPV10}. Recently,  Courty {\em et al.} \cite{DBLP:journals/pami/CourtyFTR17} modeled the domain adaptation problem as a transportation problem of computing the EMD between the source and target domains. Similar to the first application, the datasets are often represented as high-dimensional point sets and therefore we need to compute their EMD in high dimension. 


In many practical scenarios, we usually only need to \textbf{quickly answer the question that whether the EMD between the given point sets is larger or smaller than a threshold}, instead of returning the exact EMD value or the EMD induced map in the bipartite graph $A\times B$. For example, given two large-scale datasets, we may just want to know that whether they are similar enough and do not care about the detailed map of the data items; for the domain adaptation problem, we may want to quickly determine that whether the given annotated dataset is a suitable source for the unlabeled dataset before conducting the expensive computation for the transportation problem. Thus, it is critical to design a fast algorithm to satisfy these applications. However, existing methods for computing EMD or estimating EMD bounds often suffer from the issues like high complexity or high distortion in high dimensions (a detailed discussion on existing methods is given in Section~\ref{sec-relate}). 

In this paper, we study the \textbf{EMD query problem}: given a value $T\geq 0$, we want to quickly determine that whether the EMD between $A$ and $B$ is larger or smaller than it.  In particular we consider the data having low doubling dimension (we will provide the formal definition of doubling dimension in Section~\ref{sec-pre}). The doubling dimension is a measure that has been widely adopted in machine learning community for describing the intrinsic dimensionality of data~\cite{DBLP:journals/jcss/BshoutyLL09,DBLP:journals/talg/ChanGMZ16}. Note that  real-world high-dimensional data often reveals small intrinsic dimension. For example, an image set can be represented as a set of vey high-dimensional vectors, while the  vectors may be distributed nearby a low-dimensional manifold; thus their intrinsic dimension could be much smaller than the dimension of the Euclidean space~\cite{belkin2003problems}. 


\textbf{Our contribution.} 
In this paper, we develop a ``data-dependent'' algorithm for solving the EMD query problem. Our algorithm relies on a hierarchical structure that can be viewed as a simplified {\em net tree} in doubling metrics~\cite{har2006fast}. In particular, the height of the structure depends on how close the exact EMD and $T$ are. Specifically, the lager the difference between the two values, the lower the structure (and the lower the running time). 
Besides the low complexity, our algorithm also enjoys the following two advantages. \textbf{(1).} Our algorithm does not need to build any complicated data structure and is easy to implement in practice. Also, our method can be easily modified to handle the case that the doubling dimension of the data is not given in advance. \textbf{(2).} Our algorithm actually is a general framework for solving the EMD query problem, that is, any existing EMD algorithm can be plugged as the black box for computing the EMD of an ``easy'' instance in each round of our framework. Hence the efficiency of our framework can be always improved if any new EMD algorithm is proposed in future.

\vspace{-0.1in}
\subsection{Existing Methods for Computing EMD}
\label{sec-relate}
\vspace{-0.1in}

A number of minimum cost flow algorithms have been developed in the past decades~\cite{ahuja1993network,orlin1993polynomial,orlin1997polynomial,DBLP:journals/combinatorica/Tardos85,DBLP:journals/jacm/GoldbergT89}.
Suppose $n$ and $m$ are the numbers of vertices and edges in the bipartite graph respectively, and $U$ is the maximum weight. 
Orlin~\cite{DBLP:conf/stoc/Orlin88} developed a strongly polynomial algorithm with the time complexity $O(n\log n (m+n\log n))$.
 Lee and  Sidford~\cite{DBLP:conf/focs/LeeS14} designed a novel linear solver and one can apply it to solve the minimum cost flow problem in $O(n^{2.5} poly(\log U))$ time. 
Using the idea of preconditioning, Sherman~\cite{DBLP:conf/soda/Sherman17} provided a $(1+\epsilon)$-approximation algorithm with the running time $O(n^{2+o(1)}\epsilon^{-2})$.
For the instances in Euclidean space, a sequence of faster algorithms have been proposed in the community of computational geometry. For example, Khesin {\em et al.}~\cite{DBLP:conf/compgeom/KhesinNP19} applied the idea of preconditioning~\cite{DBLP:conf/soda/Sherman17} to design two randomized $(1+\epsilon)$-approximation nearly linear time algorithms (if the dimension $d$ is a constant number).

Several practical EMD algorithms for low-dimensional patterns, like 2D images, were proposed before~\cite{ling2007efficient,DBLP:conf/cvpr/ShirdhonkarJ08,pele2009fast,DBLP:journals/pvldb/TangUCMC13,Chan2019ThePO}. 
In the community of machine learning, Cuturi~\cite{DBLP:conf/nips/Cuturi13} proposed a new objective called ``Sinkhorn Distance'' that smoothes the transportation problem with an entropic regularization term, and it can be solved much faster than computing the exact EMD; Li {\em et al.}~\cite{DBLP:journals/jscic/LiROYG18} designed a parallel method for computing EMD. Following Cuturi's work, several improved Sinkhorn algorithms have been proposed~\cite{DBLP:conf/nips/AltschulerWR17,DBLP:conf/nips/AltschulerBRN19,DBLP:conf/nips/MuzellecC19} very recently. Kusner {\em et al.}~\cite{kusner2015word} modified the objective of EMD and proposed a new distance called ``Relaxed Word Mover's Distance (RWMD)'' which is easier to compute; Atasu and Mittelholzer~\cite{DBLP:conf/icml/AtasuM19} further showed a linear time parallel RWMD algorithm. 


Several efficient algorithms have been developed for estimating the EMD without computing the induced map between the given point sets. For example, Indyk~\cite{indyk2007near} gave a near linear time constant factor approximation algorithm by using the importance sampling technique; Cabello {\em et al.}~\cite{cabello2008matching} showed that it is possible to achieve a $(1+\epsilon)$-approximation in $O(\frac{n^2}{\epsilon^2}\log^2 n)$ time by constructing the geometric spanner; Andoni {\em et al.}~\cite{DBLP:conf/stoc/AndoniNOY14} gave a streaming algorithm that can return a $(1+\epsilon)$-approximation estimate in $O(n^{1+o(1)})$ time. 
However, most of these algorithms rely on the geometric techniques in low-dimensional space, and their complexities are exponential in the dimensionality $d$. Li~\cite{DBLP:journals/corr/abs-1002-4034} generalized the method of~\cite{indyk2007near} and proposed an $O(\rho)$-approximate estimate of EMD where $\rho$ is the doubling dimension of the given data; however, the algorithm needs a $O(n^2 poly(\log n))$ preprocessing time that could be too high when $n$ is large.

Another natural approach for computing EMD is {\em metric embedding}
~\cite{DBLP:conf/soda/AndoniIK08,IT03}. However, this approach often has a large distortion ({\em e.g.,} $O(\log n\cdot \log d)$ in~\cite{DBLP:conf/soda/AndoniIK08}), and thus is not suitable for solving our problem with large $n$ and $d$.

\subsection{Preliminaries}
\label{sec-pre}

We introduce several important definitions that will be used throughout this paper.

\begin{definition}[Earth Mover's Distance (EMD)]
\label{def-emd}
Let $A=\{a_1, a_2, \cdots, a_{n_A}\}$ and $B=\{b_1, b_2, \cdots, b_{n_B}\}$ be two sets of weighted points in $\mathbb{R}^d$ with nonnegative weights $\alpha_i$ and $\beta_j$ for each $a_i\in A$ and $b_j\in B$, and $\sum^{n_A}_{i=1}\alpha_i=\sum^{n_B}_{j=1}\beta_j=W$. Their earth mover's distance is
\begin{eqnarray}
\mathcal{EMD}(A, B)=\frac{1}{W}\min_{F}\sum^{n_A}_{i=1}\sum^{n_B}_{j=1}f_{ij}||a_i-b_j||, \label{for-emd}
\end{eqnarray} 
where $||\cdot||$ indicates the Euclidean distance and $F=\{f_{ij}\mid 1\leq i\leq n_A, 1\leq j\leq n_B\}$ is a feasible flow from $A$ to $B$, i.e., each $f_{ij}\geq 0$, $\sum^{n_A}_{i=1}f_{ij}=\beta_j$, and $\sum^{n_B}_{j=1}f_{ij}=\alpha_i$. 
\end{definition}

\begin{definition}[EMD Query]
\label{def-query}
Given two weighted point sets $A$ and $B$ in $\mathbb{R}^d$ and  $T\geq 0$, the problem of EMD Query is to answer the question that whether $\mathcal{EMD}(A, B)\geq T$ or  $\mathcal{EMD}(A, B)\leq T$.
\end{definition}

For any point $p\in \mathbb{R}^d$ and $r\geq 0$, we use $Ball(p, r)=\{q\in  \mathbb{R}^d\mid ||q-p||\leq r\}$ to indicate the ball of radius $r$ around $p$. Usually, the doubling dimension is defined for an abstract metric space~\cite{DBLP:journals/talg/ChanGMZ16}. In this paper, since we focus mainly on the applications for high-dimensional data with low intrinsic dimensions, we directly describe the doubling dimension for point sets in Euclidean space.

\begin{definition}[Doubling Dimension]
\label{def-dd}
The doubling dimension of a point set $P\subset\mathbb{R}^d$ is the smallest number $\rho$, such that for any $p\in P$ and $r\geq 0$, $P\cap Ball(p, 2r)$ is always covered by the union of at most $2^\rho$ balls with radius $r$.
\end{definition}
The doubling dimension describes the expansion rate of $P$. For example, imagine a set of points uniformly distributed in a $d'$-dimensional flat in $\mathbb{R}^d$, and then the doubling dimension is $O(d')$ but the Euclidean dimension $d$ can be much higher. 

\begin{claim}
\label{cla-dd}
Let $A$ and $B$ be two point sets in $\mathbb{R}^d$ with each one having the doubling dimension $\rho>0$. Then the set $A\cup B$ has the doubling dimension at most $\rho+1$.
\end{claim}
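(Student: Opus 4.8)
The plan is to verify Definition~\ref{def-dd} directly for $A\cup B$ with exponent $\rho+1$. Fix any $p\in A\cup B$ and any $r\ge 0$; we must cover $(A\cup B)\cap Ball(p,2r)$ by at most $2^{\rho+1}=2\cdot 2^\rho$ balls of radius $r$. The natural decomposition is $(A\cup B)\cap Ball(p,2r)=\big(A\cap Ball(p,2r)\big)\cup\big(B\cap Ball(p,2r)\big)$, so it suffices to cover each of the two pieces by at most $2^\rho$ balls of radius $r$ and take the union of the two covers. By the symmetry of the statement in $A$ and $B$, we may assume $p\in A$.

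For the first piece this is immediate: since $p\in A$ and $A$ has doubling dimension $\rho$, Definition~\ref{def-dd} applied to $A$ (with the same $p$ and $r$) yields a cover of $A\cap Ball(p,2r)$ by at most $2^\rho$ balls of radius $r$. The second piece $B\cap Ball(p,2r)$ is where the work lies, since $p$ need not lie in $B$ and hence Definition~\ref{def-dd} for $B$ cannot be invoked at $p$ directly. If $B\cap Ball(p,2r)=\emptyset$ there is nothing to do; otherwise I would fix some $q\in B\cap Ball(p,2r)$, observe that $||p-q||\le 2r$ forces $Ball(p,2r)\subseteq Ball(q,4r)$ and hence $B\cap Ball(p,2r)\subseteq B\cap Ball(q,4r)$, and then use the doubling property of $B$ at the point $q\in B$ to cover this set by at most $2^\rho$ balls of radius $r$. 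Adding the two covers gives at most $2^{\rho+1}$ balls of radius $r$ for $(A\cup B)\cap Ball(p,2r)$, which is exactly the doubling condition with exponent $\rho+1$; since $p$ and $r$ were arbitrary, the claim follows.

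I expect the single delicate point to be that last step: turning "$B$ intersected with a radius-$4r$ ball centered at a point $q\in B$" into a cover by only $2^\rho$ balls of radius $r$, rather than the weaker bound one gets by naively re-halving the radius. Morally this is the statement that the doubling bound for $B$ is insensitive, up to the additive constant in the exponent, to shifting the query ball from $p$ to a nearby point of $B$ at essentially the same scale; it can be made precise either by a short triangle-inequality/recentering argument, or by bounding the number of radius-$r$ balls needed for $B\cap Ball(p,2r)$ by the size of a maximal $r$-separated subset of it and then controlling that size via Definition~\ref{def-dd} for $B$. Everything else---the set-theoretic decomposition and the $A$-piece---is routine.
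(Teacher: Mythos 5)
Your set-theoretic decomposition is exactly the paper's, and your treatment of the $A$-piece is correct. The problem is the step you yourself flag as delicate: it is not merely delicate, it is false. After recentering at $q\in B\cap Ball(p,2r)$ you need to cover $B\cap Ball(q,4r)$ by $2^{\rho}$ balls of radius $r$, i.e.\ a covering bound at scale ratio $4$ with the same constant that Definition~\ref{def-dd} provides only at scale ratio $2$. Applying the definition at $q$ with radius parameter $2r$ gives $2^{\rho}$ balls of radius $2r$, and the extra halving cannot be obtained for free: the new covering balls are not centered at points of $B$, recentering each at a point of $B$ inside it doubles its radius back to $4r$ (so the iteration stalls), and the honest count is $2^{2\rho}$ balls of radius $r$, not $2^{\rho}$. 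The packing/$r$-separated-set route hits the same wall. A three-point example already kills the step as you stated it: $B=\{0,3,6\}\subset\mathbb{R}^{1}$ has doubling dimension $1$, yet with $q=3$ and $r=1.4$ the set $B\cap Ball(q,4r)=B$ needs $3>2^{1}$ balls of radius $r$. The underlying issue is that your reduction is lossy: you replace the constraint ``all points lie within $2r$ of the common center $p$'' by the strictly weaker ``all points lie within $4r$ of $q$,'' and the covering number genuinely degrades under that replacement. Carried through, your argument yields a bound of about $2^{\rho}+2^{2\rho}$ covering balls, i.e.\ doubling dimension $O(\rho)$, not $\rho+1$.

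For comparison, the paper's proof is a one-liner that applies the covering bound of $A$ and of $B$ directly to the ball $Ball(p,2r)$ itself, regardless of whether $p$ belongs to the respective set --- that is, it reads the doubling property as a statement about arbitrarily centered Euclidean balls (this is also how the paper uses Definition~\ref{def-dd} in Section~\ref{sec-bc}, where it is applied ``twice'' starting from the minimum enclosing ball). Under that reading the issue you worried about never arises and the two covers simply add to $2^{\rho}+2^{\rho}=2^{\rho+1}$. So you correctly observed that the literal wording of Definition~\ref{def-dd} (which restricts the query center to $p\in P$) does not license this when $p\notin B$, but the repair is not to recenter and pay for an extra halving; it is to use (or explicitly adopt) the arbitrarily-centered form of the covering bound at the original scale $2r$.
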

This claim is easy to verify. Given any ball $Ball(p, 2r)$, we have $\big(A\cup B\big)\cap Ball(p, 2r)=\big(A\cap Ball(p, 2r)\big)\cup \big(B\cap Ball(p, 2r)\big)$. So $\big(A\cup B\big)\cap Ball(p, 2r)$ is covered by at most $2^\rho+2^\rho=2^{\rho+1}$ balls with radius $r$. Therefore, Claim~\ref{cla-dd} is true.

\textbf{The rest of the paper is organized as follows.} In Section~\ref{sec-bc}, we introduce a simplified variant of the net tree method to hierarchically decompose a given set of points in the space.
By using the algorithm proposed in Section~\ref{sec-bc}, we introduce our method for solving the EMD query problem (approximately) in Section~\ref{sec-match}. Finally, we evaluate the experimental  performances in Section~\ref{sec-exp}.

\section{Hierarchical Gonzalez's algorithm}
\label{sec-bc}

In this section, we propose a hierarchical algorithm to decompose the point set from coarse to fine. Roughly speaking, given a set $P$ of $n$ points in $\mathbb{R}^d$, we partition it to be covered by a set of balls where the number of the balls is bounded; then we recursively perform the same strategy for the points inside each individual ball until the radius becomes small enough. It is easy to see that this approach will yield a tree, where each node of the tree corresponds to an individual ball and its children form a decomposition of the points inside the ball. 

The structure actually can be realized by constructing the {\em net tree} which has been particularly studied in the context of doubling metrics. Har-Peled and Mendel~\cite{har2006fast} showed that the net tree can be constructed in $2^{O(\rho)}nd\log n$ expected time if the point set has the doubling dimension $\rho$; their idea is based on a fast implementation of the well-known $k$-center clustering algorithm~\cite{gonzalez1985clustering} and the method of {\em hierarchically well-separated tree (HST)}~\cite{DBLP:conf/focs/Bartal96}. 
However, their method needs to maintain and update some auxiliary data structures that are not very efficient for handling large-scale datasets in practice. Moreover, the method takes an extra $O(2^{O(\rho)}n)$ space for maintaining the data structures (besides the original $O(nd)$ for storing the input data).

\begin{algorithm}[tb]
   \caption{\textsc{Hierarchical Gonzalez's algorithm}}
   \label{alg-hg}
\begin{algorithmic}
  \STATE {\bfseries Input:} A set $P$ of $n$ points in $\mathbb{R}^d$, a parameter $h>0$, and the doubling dimension $\rho$. 
   \STATE
   \begin{enumerate}
   \item Initialize an empty tree $\mathcal{H}$, and each node $v$ of $\mathcal{H}$ is associated with a point $p_v$ and a subset $P_v$ of $P$.  
   \item Arbitrarily select a point $p_0\in P$. Let  the root node of $\mathcal{H}$ be $v_0$. Also, set $p_{v_0}=p_0$ and $P_{v_0}=P$. The root $v_0$ is labeled as the $0$-th level node. 
   \item Starting from $v_0$, recursively grow each node $v$ of $\mathcal{H}$ as follows:
   \begin{enumerate}
  \item Suppose the level of $v$ is $i\geq 0$. If $i=h$ or $P_v$ contains only one point, $v$ will be a leaf and stop growing it.
  \item Else, run the Gonzalez's algorithm $2^{2\rho}$ rounds on $P_v$, and obtain the $2^{2\rho}$ clusters with their cluster centers; add $2^{2\rho}$ children nodes to $v$, where each child is associated with an individual cluster of $P_v$ and the corresponding cluster center. Each child is labeled as a $(i+1)$-level node. 
     \end{enumerate}
   
   \end{enumerate}
 \vskip -0.05in
\end{algorithmic}
\vskip -0.05in
\end{algorithm}

\textbf{Our approach and high-level idea.} 
In a standard net tree, the nodes at the same level are required to satisfy two key properties: the ``covering property'' and ``packing property''. Informally speaking, at each level of the net tree, the covering property requires that each point of $P$ should be covered by a ball centered at one node (each node has a ``representative'' point from $P$) with a specified radius; the packing property requires that the representatives of the nodes are ``well separated'' ({\em i.e.,} their inter distances should be large enough). 
We observe that  the ``packing property'' is not a necessary condition to solve our EMD query problem. Hence our proposed algorithm can be viewed as a simplified variant of the net tree method, which only keeps the covering property and takes only a $O(nd)$ space complexity. 

Our algorithm also relies on the Gonzalez's $k$-center clustering algorithm~\cite{gonzalez1985clustering}, and we briefly introduce it for the sake of completeness. Initially, it selects an arbitrary point, say $c_1$, from the input $P$ and lets $S=\{c_1\}$; then it iteratively selects a new point that has the largest distance to $S$ among the points of $P$ and adds it to $S$, until $|S|=k$ (the distance between a point $q$ and $S$ is defined as $dist(q, S)=\min\{||q-p||\mid p\in S\}$); suppose $S=\{c_1, \cdots, c_k\}$, and then $P$ is covered by the $k$ balls $Ball(c_1, r), \cdots, Ball(c_k, r)$ with  $r\leq\min\{||c_i-c_j||\mid 1\leq i\neq j\leq k\}$. It is easy to know that the running time of the Gonzalez's algorithm is $O(|S|nd)$.

Our main idea is to hierarchically decompose the given point set and run the Gonzalez's algorithm locally, and therefore we name the algorithm as {\em \textsc{Hierarchical Gonzalez's algorithm}} (see Algorithm~\ref{alg-hg}). Denote by $\Delta$ the radius of the minimum enclosing ball of $P$. Initially, the whole point set $P$ is covered by a ball with radius $\Delta$. By applying Definition~\ref{def-dd} twice, we know that $P$ is covered by $2^{2\rho}$ balls $\mathbb{B}=\{\mathcal{B}_1, \cdots, \mathcal{B}_{2^{2\rho}}\}$ with radius $\Delta/4$ (note that we can only claim these balls exist, but cannot find these balls explicitly). If running the Gonzalez's algorithm $2^{2\rho}$ rounds, we obtain $2^{2\rho}$ points, say $\{s_1, s_2, \cdots, s_{2^{2\rho}}\}$, and consider two cases: the points separately fall into different balls of $\mathbb{B}$ or not. For the first case, through the triangle inequality we know that $P$ is covered by $\cup^{2^{2\rho}}_{j=1}Ball(s_j, \Delta/2)$. For the other case ({\em i.e.,} there exist two points, say $s_{j_1}$ and $s_{j_2}$, falling into one ball, and thus the  distance $||s_{j_1}-s_{j_2}||\leq \Delta/2$), due to the nature of the Gonzalez's algorithm, we know that for each point $p\in P$, 
\begin{eqnarray}
\min_{1\leq j\leq 2^{2\rho}}||p-s_j||\leq \min_{1\leq j<j'\leq 2^{2\rho}}||s_j-s_{j'}||\leq ||s_{j_1}-s_{j_2}||\leq \Delta/2. \label{for-tree1}
\end{eqnarray}
Thus, for the second case, $P$ is also covered by  $\cup^{2^{2\rho}}_{j=1}Ball(s_j, \Delta/2)$. Namely, we decompose $P$ into $2^{2\rho}$ parts and each part is covered by a ball with radius $\Delta/2$. In the following steps, we just recursively run the Gonzalez's algorithm on each part locally. If we perform $\log\frac{\Delta}{r}$ rounds with a specified value $r>0$, each point of $P$ will be covered by a ball with radius $r$. Moreover, we can imagine that the algorithm generates a hierarchical tree $\mathcal{H}$ with height $h=\log\frac{\Delta}{r}+1$, where the root ($0$-th level) corresponds to the set $P$ and each node at the $i$-th level, $1\leq i\leq \log\frac{\Delta}{r}$, corresponds to a subset of $P$ that is covered by a ball with radius $\Delta/2^i$. Obviously, each leaf node is covered by a ball with radius $r$, and the total number of leaves is $\min\{n, (2^{2\rho})^{\log\frac{\Delta}{r}}\}=\min\{n,(\frac{\Delta}{r})^{2\rho}\}$ (we stop growing the node if its corresponding subset has only one point). 

\textbf{Running time.} 
For the $i$-th level of $\mathcal{H}$, denote by $n_1, n_2, \cdots, n_{2^{2\rho i}}$ the number of points covered by the $2^{2\rho i}$ nodes, respectively (obviously, $\sum^{2^{2\rho i}}_{j=1}n_j=n$). For each node, we run the Gonzalez's algorithm $2^{2\rho}$ rounds locally. Therefore, the total running time cost at the $i$-th level is 
\begin{eqnarray}
\sum^{2^{2\rho i}}_{j=1}O(2^{2\rho}n_j d)=O(2^{2\rho}n d). \label{for-tree2}
\end{eqnarray}
Consequently, the total running time of Algorithm~\ref{alg-hg} is $O(2^{2\rho} (\log \frac{\Delta}{r})nd)$ if $h=\log \frac{\Delta}{r}+1$.


\textbf{Space complexity.} 
In Section~\ref{sec-match}, we will show that we actually do not need to store the whole $\mathcal{H}$. Instead, we conduct the computation  from top to bottom along $\mathcal{H}$. The space used for the $i$-th level can be released when the nodes at the $(i+1)$-th level all have been generated. That is, we just need to store at most two levels when constructing the tree $\mathcal{H}$ in  Algorithm~\ref{alg-hg}. Also, the space used for storing each level is always $O(nd)$. Therefore, the space complexity of Algorithm~\ref{alg-hg} is $O(nd)$.

Overall, we have the following theorem.

\begin{theorem}
\label{the-tree}
Let $r>0$ be a given number. If we set $h=\log \frac{\Delta}{r}+1$, the \textsc{Hierarchical Gonzalez's algorithm} (Algorithm~\ref{alg-hg}) generates a set of $\min\{n, (\frac{\Delta}{r})^{2\rho}\}$ balls covering $P$ with radius $r$, in $O(2^{2\rho} (\log \frac{\Delta}{r})nd)$ time. The space complexity is $O(nd)$.
\end{theorem}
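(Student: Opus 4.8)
The plan is to verify the three assertions of Theorem~\ref{the-tree} — the covering guarantee together with the bound on the number of balls, the time bound, and the space bound — essentially by organizing the observations made before the theorem into a clean induction. For the covering part I would prove, by induction on the level $i$, the invariant that every node $v$ at level $i$ satisfies $P_v\subseteq Ball(x_v,\Delta/2^i)$ for some center $x_v$ (not necessarily $p_v$). The base case $i=0$ holds because $P_{v_0}=P$ sits in its minimum enclosing ball of radius $\Delta$.

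For the inductive step, fix $v$ at level $i$ with $P_v\subseteq Ball(x_v,R)$, $R=\Delta/2^i$. Since $P_v\subseteq P$ its doubling dimension is at most $\rho$, so (exactly as in the paragraph containing~\eqref{for-tree1}) $P_v$ is covered by an existential family of $2^{2\rho}$ balls of radius $R/4$. Let $s_1,\dots,s_{2^{2\rho}}$ be the points returned by $2^{2\rho}$ rounds of Gonzalez on $P_v$. I would first record the defining property of Gonzalez's procedure — after selecting $s_1,\dots,s_k$, the largest distance from a point of $P_v$ to $\{s_1,\dots,s_k\}$ is at most every pairwise distance $\|s_j-s_{j'}\|$ (the $(k{+}1)$-st candidate is the farthest unchosen point, and the farthest distance is nonincreasing) — and then split into the two cases of the text: either the $s_j$ occupy pairwise distinct balls of the family, so every $p\in P_v$ shares a small ball with some $s_j$ and hence lies within $R/4+R/4=R/2$ of it; or two of them lie in a common small ball, whence $\min_j\|p-s_j\|\le\min_{j<j'}\|s_j-s_{j'}\|\le R/2$ by the recorded property. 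In either case $P_v\subseteq\bigcup_j Ball(s_j,R/2)$, and the children of $v$ are precisely the Gonzalez clusters, the $j$-th contained in $Ball(s_j,R/2)=Ball(s_j,\Delta/2^{i+1})$, completing the induction. With $h=\log\frac{\Delta}{r}+1$ the leaves are produced after $\log\frac{\Delta}{r}$ rounds and thus lie in balls of radius $r$, which is what the algorithm outputs. The ball count is bounded two ways: at most $(2^{2\rho})^{\log(\Delta/r)}=(\Delta/r)^{2\rho}$ by the branching factor, and at most $n$ because distinct leaves own disjoint nonempty subsets of $P$; hence $\min\{n,(\Delta/r)^{2\rho}\}$.

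For the running time, at a fixed level $i$ the sets $P_v$ over non-leaf $v$ are disjoint subsets of $P$, so $\sum_v|P_v|\le n$; since one run of $2^{2\rho}$ Gonzalez rounds on $m$ points costs $O(2^{2\rho}md)$, the work at level $i$ is $O(2^{2\rho}nd)$ (this is~\eqref{for-tree2}), and summing over the $h=\log\frac{\Delta}{r}+1$ levels gives $O(2^{2\rho}(\log\frac{\Delta}{r})nd)$. For the space bound, I would invoke the fact (exploited in Section~\ref{sec-match}) that the computation touches only one level in order to build the next, so keeping two consecutive levels suffices; each level stores a partition of a subset of $P$ plus one center per node, i.e.\ $O(nd)$, giving $O(nd)$ overall on top of the input. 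The part needing the most care is the inductive step: one must make sure the ``doubling dimension $\rho\Rightarrow$ covered by $2^{2\rho}$ balls of a quarter radius'' claim is applied to the right object (the subset $P_v$, which indeed inherits doubling dimension $\le\rho$ and is enclosed in a radius-$R$ ball), invoking the form of the doubling property in which the enclosing ball may have an arbitrary center, and that the two-case dichotomy is exhaustive and yields the $R/2$ bound in both cases; once these are nailed down, the time and space claims are routine accounting.
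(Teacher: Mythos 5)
Your proposal is correct and follows essentially the same route as the paper: the paper's argument is exactly the two-case analysis around inequality~(\ref{for-tree1}) (distinct doubling balls versus two Gonzalez centers sharing one), applied recursively to halve the radius at each level, with the per-level cost bound~(\ref{for-tree2}) and the two-consecutive-levels observation for space. Your only addition is packaging the recursion as an explicit induction on the level and flagging the (shared, and equally glossed-over in the paper) subtlety about ball centers in the doubling-dimension definition.
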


\begin{remark} [\textbf{If $\rho$ is not given}]
\label{rem-tree}
In Algorithm~\ref{alg-hg}, we require to input the doubling dimension $\rho$. Actually this is not necessary. Assume we know the value of $\Delta$ ({\em i.e.,} the radius of the minimum enclosing ball of $P$). Then, for each node  
$v$ at the $i$-th level of $\mathcal{H}$, we just run the Gonzalez's algorithm $k$ rounds until the obtained $k$ clusters have radius at most $\Delta/2^{i+1}$; by the same manner of our previous analysis, we know that $k$ should be no larger than $2^{2\rho}$. Therefore, we have the same time and space complexities as Theorem~\ref{the-tree}. 

But it is expensive to compute the exact value of $\Delta$. One solution is to compute an approximate minimum enclosing ball of $P$ ({\em e.g.,}  the $O(\frac{1}{\epsilon}nd)$ time $(1+\epsilon)$-approximation algorithm of \cite{badoiu2003smaller}). Actually, we can solve this issue by a much simpler way. We can arbitrarily select a point $p$ and its farthest point $p'$ from $P$ (this step takes only linear time), and it is easy to see that $||p-p'||\in [\Delta, 2\Delta]$; then we just replace $\Delta$ by the value $\tilde{\Delta}=||p-p'||$ in the algorithm. Since $\tilde{\Delta}\leq 2 \Delta$, the height of the tree $\mathcal{H}$ will be at most $\log\frac{\tilde{\Delta}}{r}+1\leq \log\frac{\Delta}{r}+2$ (so we just increase the height by one).

\end{remark}

\vspace{-0.07in}
\section{EMD Query Algorithm}
\label{sec-match}
\vspace{-0.07in}

The recent hardness-of-approximation result reveals that it is quite unlikely to achieve an algorithm being able to solve the EMD query problem with a low time complexity. 
Under the Hitting Set Conjecture, 
Rohatgi~\cite{DBLP:conf/approx/Rohatgi19} proved that there is no truly subquadratic time algorithm yielding an approximate EMD in high dimensions. In this section, we consider solving the EMD query problem in a more efficient way. To better understand our algorithm, we introduce the high-level idea first.

\textbf{High-level idea of Algorithm~\ref{alg-hemd}.} To avoid directly solving the challenging EMD problem, we relax the requirement of Definition~\ref{def-query} slightly. Our intuition is similar to the relaxation for the nearest-neighbor search problem by {\em Locality-Sensitive Hashing}, which distinguishes the cases that the distance is smaller than $R$ or larger than $cR$ for some $R>0$ and $c>1$~\cite{DBLP:journals/cacm/AndoniI08}. Let $(A, B, T)$ be an instance of Definition~\ref{def-query}. Suppose $\epsilon>0$ is a given small parameter, and for simplicity we let $\Delta$ be the maximum radius of the minimum enclosing balls of $A$ and $B$. Our idea is to distinguish the instances ``$\mathcal{EMD}(A, B)> T+ \epsilon \Delta$'' and ``$\mathcal{EMD}(A, B)<T- \epsilon \Delta$''; the term ``$\epsilon \Delta$'' can be viewed as the induced approximation error. To realize this goal, we use the hierarchical structure $\mathcal{H}$ constructed in Algorithm~\ref{alg-hg} to estimate the value of $\mathcal{EMD}(A, B)$ from coarse to fine, until these two instances can be distinguished. At each level, we just need to compute an easy instance, $\mathcal{EMD}(A_i, B_i)$, where the sizes of $A_i$ and $B_i$ are much smaller; then we use the obtained value $\mathcal{EMD}(A_i, B_i)$ to determine whether we need to go deeper (see Step 2(a)-2(d)). 
For ease of presentation, we name 
the following three cases:
  \textbf{case 1:} $\mathcal{EMD}(A, B)> T$;
 \textbf{case 2:} $\mathcal{EMD}(A, B)< T$;
 \textbf{case 3:} $\mathcal{EMD}(A, B)\in T\pm \epsilon \Delta$.

\begin{theorem}
\label{the-hemd}
There are $4$ possible events in total. 
\textbf{(\rmnum{1})} If $\mathcal{EMD}(A, B)> T+ \epsilon \Delta$, Algorithm~\ref{alg-hemd} will return ``case 1''. \textbf{(\rmnum{2})} If $\mathcal{EMD}(A, B)<T- \epsilon \Delta$, the algorithm will return ``case 2''. \textbf{(\rmnum{3})}  If $\mathcal{EMD}(A, B)\in [T, T+ \epsilon \Delta]$, the algorithm will return ``case 1'' or  ``case 3''. \textbf{(\rmnum{4})}  If $\mathcal{EMD}(A, B)\in [ T- \epsilon \Delta, T]$, the algorithm will return ``case 2'' or  ``case 3''. 
The height of the tree $\mathcal{H}$ built in  Algorithm~\ref{alg-hemd}  is at most $\min\{\log\frac{1}{\epsilon}, \log\frac{\Delta}{\delta}\}+5$ where $\delta=\big|\mathcal{EMD}(A, B)-T\big|$.
\end{theorem}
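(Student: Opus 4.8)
The plan is to combine three ingredients: a per-level approximation guarantee for the coarsened instances $(A_i,B_i)$; the one-sided safety of the decision rules of Step~2; and a geometric-decay argument converting the gap $\delta=\bigl|\mathcal{EMD}(A,B)-T\bigr|$ into a bound on the depth the algorithm must reach. Write $e_i:=\mathcal{EMD}(A_i,B_i)$ and let $r_i$ be the radius bound of the level-$i$ balls, which by Theorem~\ref{the-tree} (and Remark~\ref{rem-tree}, replacing $\Delta$ by $\tilde\Delta\le2\Delta$) satisfies $r_i=O(\Delta/2^i)$ and halves at each level. First I would note that $A_i$ (resp.\ $B_i$) is obtained by moving the weight of each point of $A$ (resp.\ $B$) to the representative of its level-$i$ node — a displacement of at most $r_i$, by the covering property of Theorem~\ref{the-tree} — so the associated flow certifies $\mathcal{EMD}(A,A_i)\le r_i$ and $\mathcal{EMD}(B,B_i)\le r_i$. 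Since $\mathcal{EMD}$ on distributions of equal total mass is a metric (the Kantorovich / $1$-Wasserstein distance), the triangle inequality applied in each argument gives
\[
\bigl|e_i-\mathcal{EMD}(A,B)\bigr|\;\le\;\mathcal{EMD}(A,A_i)+\mathcal{EMD}(B,B_i)\;\le\;2r_i\;=:\;\eta_i,
\]
with $\eta_i=O(\Delta/2^i)\to0$; hence, after computing $e_i$, one knows for certain that $\mathcal{EMD}(A,B)\in[e_i-\eta_i,e_i+\eta_i]$. Recall that Step~2 declares ``case~1'' only when $e_i-\eta_i>T$, ``case~2'' only when $e_i+\eta_i<T$, and ``case~3'' only when in addition $\eta_i\le\epsilon\Delta/2$.

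\textbf{The four events.} Each is read off by intersecting the certified interval with these rules. If $\mathcal{EMD}(A,B)\in[T,T+\epsilon\Delta]$ then $e_i+\eta_i\ge\mathcal{EMD}(A,B)\ge T$ at every level, so ``case~2'' never fires and the output is ``case~1'' or ``case~3'' (event (iii)); the range $[T-\epsilon\Delta,T]$ is symmetric (event (iv)). If $\mathcal{EMD}(A,B)>T+\epsilon\Delta$, then ``case~2'' is again impossible, and at the first level with $\eta_i\le\epsilon\Delta/2$ we have $e_i-\eta_i\ge\mathcal{EMD}(A,B)-2\eta_i>T$, so ``case~1'' fires there — before the ``case~3'' rule can — forcing the output ``case~1'' (event (i)); $\mathcal{EMD}(A,B)<T-\epsilon\Delta$ is symmetric (event (ii)). Finally, whenever ``case~3'' is output at some level $i$ the instance was still undecided there, i.e.\ $T\in[e_i-\eta_i,e_i+\eta_i]$, whence $\bigl|\mathcal{EMD}(A,B)-T\bigr|\le2\eta_i\le\epsilon\Delta$; so the reported answer is always consistent with the three named cases and there is no fifth event.

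\textbf{Height bound and the main obstacle.} Let $i_1$ be the first level with $\eta_i\le\epsilon\Delta/2$; since $\eta_i=O(\Delta/2^i)$ this yields $i_1\le\log\tfrac1\epsilon+O(1)$. If $\delta>0$, let $i_2$ be the first level with $\eta_i<\delta/2$; then $i_2\le\log\tfrac{\Delta}{\delta}+O(1)$, and the crucial point is that \emph{at level $i_2$ the decisive rule is forced to fire}: if $\mathcal{EMD}(A,B)>T$ then $e_{i_2}-\eta_{i_2}\ge\mathcal{EMD}(A,B)-2\eta_{i_2}>\mathcal{EMD}(A,B)-\delta=T$, triggering ``case~1'', and symmetrically ``case~2'' if $\mathcal{EMD}(A,B)<T$. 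Hence the algorithm halts — so $\mathcal{H}$ stops growing — by level $\min\{i_1,i_2\}\le\min\{\log\tfrac1\epsilon,\log\tfrac{\Delta}{\delta}\}+5$, the additive $O(1)$ slack (level indexing, the absolute constant in $r_i=O(\Delta/2^i)$, and $\tilde\Delta\le2\Delta$) being absorbed into the ``$+5$''; when $\delta=0$ the second bound is vacuous and only $i_1$ binds. I expect the triangle-inequality estimate and the case bookkeeping to be routine; the main obstacle — and the reason the algorithm is genuinely ``data-dependent'' — is exactly the emphasized observation that the decision must resolve once the coarsening error falls below half the true gap $\delta$, which is what makes the height scale with $\log(\Delta/\delta)$ rather than always with $\log(1/\epsilon)$. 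The one delicate point to check is that in events (i)–(ii) the decisive rule fires no later than the ``case~3'' rule, so a large gap is never misreported as ``case~3''.
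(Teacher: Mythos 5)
Your overall strategy matches the paper's: a per-level error bound $\big|\mathcal{EMD}(A_i,B_i)-\mathcal{EMD}(A,B)\big|\le O(\Delta/2^{i})$ obtained from the covering radius and the triangle inequality (the paper's Lemma~\ref{lem-hemd}), one-sided safety of the stopping rules, and the observation that the decisive rule is forced to fire once the coarsening error drops below half the gap $\delta$, which yields the height bound $\min\{\log\frac{1}{\epsilon},\log\frac{\Delta}{\delta}\}+5$. Your case analysis and height argument are correct at essentially the same level of rigor as the paper (both of you are loose about integrality of the level index and about the boundary case $e_i=T\pm\eta_i$, where the non-strict inequalities in Step~2(d) could in principle let the wrong rule fire).

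There is, however, one genuine gap. You assert that $A_i$ (resp.\ $B_i$) is obtained by moving each point of $A$ (resp.\ $B$) to its level-$i$ representative, and then bound $\mathcal{EMD}(A,A_i)\le r_i$. That describes the intermediate sets the paper calls $\tilde A_i,\tilde B_i$. The sets the algorithm actually feeds to the EMD subroutine in Steps~2(c)--(d) are obtained after \emph{cancelling} the co-located supply and demand at each node: node $v^j_i$ contributes only the surplus $|n^j_i-m^j_i|$ to one of the two sides. Consequently $A_i$ does not have the same total mass as $A$, so $\mathcal{EMD}(A,A_i)$ is not even defined and the Wasserstein triangle inequality cannot be applied as you do. To bridge the two you must additionally prove $\mathcal{EMD}(A_i,B_i)=\mathcal{EMD}(\tilde A_i,\tilde B_i)$, i.e., that some optimal flow ships $\min\{n^j_i,m^j_i\}$ along each zero-length edge $(a^j_i,b^j_i)$ so the cancellation is lossless. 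This is exactly the paper's Claim~\ref{cla-hemd}, established by a flow-rerouting (cycle-cancelling) argument in the supplement; it is not a formality and is absent from your proposal.
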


\begin{remark}
\label{rem-the-hemd}
(\rmnum{1}) The algorithm relies on Algorithm~\ref{alg-hg}, and thus it can be also easily modified for solving the case that the doubling dimension is not given (see our analysis in Remark~\ref{rem-tree}). 

(\rmnum{2}) The \textbf{running time} of Algorithm~\ref{alg-hemd}  is data-dependent. The height of the tree $\mathcal{H}$ depends on how close $\mathcal{EMD}(A, B)$ and $T$ are and how accurate we require the solution to be. Specifically, the closer the values or the smaller the error parameter $\epsilon$, the higher the structure (and the higher the running time). As the sub-routine, we can apply any existing EMD algorithm $\mathcal{A}$ to compute $\mathcal{EMD}(A_i, B_i)$ at the $i$-th level of $\mathcal{H}$ (see Step 2(d)). Suppose the time complexity of $\mathcal{A}$ is $\Gamma(n_A, n_B)$ for computing the original instance $(A, B)$. Since the time function $\Gamma(\cdot, \cdot)$ usually is super-linear and the total size $|A_i|+|B_i|$ increases at a geometric rate over $i$, the complexity of Algorithm~\ref{alg-hemd} will be dominated by the running time at the last $h$-th level of $\mathcal{H}$ plus the complexity of constructing $\mathcal{H}$, {\em i.e.,} 
\begin{eqnarray}
\Gamma\big(|A_h|,|B_h|\big)+O\big(2^{2(\rho+1)}\cdot h\cdot (n_A+n_B)\cdot d\big), \label{for-rem-the-hemd-1}
\end{eqnarray}
where $|A_h|$ and $|B_h|$ are at most $2^{2(\rho+1)h}$. The height  $h$  is at most $\min\{\log\frac{1}{\epsilon}, \log\frac{\Delta}{\delta}\}+5$ due to Theorem~\ref{the-hemd}. If $n_A$ and $n_B$ are much larger than $2^{2(\rho+1)h}$, the complexity (\ref{for-rem-the-hemd-1}) is linear in the input size. Note that $\Gamma(n_A, n_B)$ usually is at least $\Omega(n_A\cdot n_B\cdot d)$, and thus our method can save a substantial amount of the running time especially when the data sizes and dimensionality are large.



(\rmnum{3}) The recent work~\cite{Chan2019ThePO} also considered outputting the bounds of EMD. However, it requires the pairwise ground distances to be given, which need $\Omega(n_A\cdot n_B\cdot d)$ time to compute, before computing the EMD. It could be very expensive when the data sizes and $d$ are large. On the other hand, our method avoids this by using the hierarchical structure. Also, the method of~\cite{Chan2019ThePO} is not always guaranteed to output the desired bounds of EMD (the algorithm may fail to generate a tight enough bound), while our method has a strict guarantee of the correctness as Theorem~\ref{the-hemd}. 

\end{remark}

\begin{algorithm}[tb]
   \caption{\textsc{Hierarchical EMD Query Algorithm}}
   \label{alg-hemd}
\begin{algorithmic}
  \STATE {\bfseries Input:} Two point sets $A$ and $B$ in $\mathbb{R}^d$, and the doubling dimension $\rho$. $T>0$ and $\epsilon\in (0,1)$. 
   \STATE
   \begin{enumerate}
   \item Compute the approximate radius of the minimum enclosing balls of $A$ and $B$ (via the method mentioned in Remark~\ref{rem-tree}), and denote them as $\tilde{\Delta}_{A}$ and $\tilde{\Delta}_{B}$ respectively. Let $\tilde{\Delta}=\max\{\tilde{\Delta}_{A},\tilde{\Delta}_{B}\}$. 
     \item Let $P=A\cup B$. Construct the tree $\mathcal{H}$ level by level (from top to bottom) via Algorithm~\ref{alg-hg} (replace $\rho$ by $\rho +1$ according to Claim~\ref{cla-dd}).
     \begin{enumerate}
     \item Let $i$ be the index of the current level at $\mathcal{H}$. If $i=\log \frac{1}{\epsilon}+5$, stop the loop and output ``\textbf{Case 3}''.

           \item Let $v^1_i, v^2_i, \cdots, v^{N}_i$ be the nodes at the $i$-th level ($N=2^{2(\rho+1) i}$). Correspondingly, each node $v^j_i$ is associated with a point $p_{v^j_i}$ and a subset $P_{v^j_i}$ of $P$. Let $n^j_i=$ the total weight of $A\cap P_{v^j_i}$ and $m^j_i=$ the total weight of $B\cap P_{v^j_i}$.  
      \item Initialize two empty sets of points $A_i$ and $B_i$. For each $v^j_i$, $1\leq j\leq N$, if $n^j_i\geq m^j_i$, add $p_{v^j_i}$ to $A_i$ and assign a weight $n^j_i-m^j_i$ to it; else, add $p_{v^j_i}$ to $B_i$ and assign a weight $m^j_i-n^j_i$ to it.

      \item Compute $\mathcal{EMD}(A_i, B_i)$  by an existing EMD algorithm. If $\mathcal{EMD}(A_i, B_i)\geq T+\frac{1}{2^{i-3}}\tilde{\Delta}$, stop the loop and output ``\textbf{Case 1}''; else if $\mathcal{EMD}(A_i, B_i)\leq T-\frac{1}{2^{i-3}}\tilde{\Delta}$, stop the loop and output ``\textbf{Case 2}''.
     \end{enumerate}
        \end{enumerate}
 
\end{algorithmic}
\end{algorithm}

To prove Theorem~\ref{the-hemd}, we need to prove the following Lemma~\ref{lem-phemd} and \ref{lem-hemd} first. 

We let $\Delta_A$ and $\Delta_B$ be the radii of the minimum enclosing balls of $A$ and $B$, respectively, and thus $\Delta=\max\{\Delta_A, \Delta_B\}$. 
Note that the radius of the minimum enclosing ball of $P=A\cup B$ could be much larger than $\Delta$. But Lemma~\ref{lem-phemd} tells us that after the first level, the approximation  error only depends on $\Delta$.

\begin{lemma}
\label{lem-phemd}
(\rmnum{1}) The value $\tilde{\Delta}$ obtained in Step 1 of Algorithm~\ref{alg-hemd} is between $\Delta$ and $2\Delta$. (\rmnum{2}) At the first level of $\mathcal{H}$, the set $P$ is decomposed into $2^{2(\rho+1)}$ balls where each ball has the radius at most $2\Delta$.
\end{lemma}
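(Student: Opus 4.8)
The plan is to prove the two parts of Lemma~\ref{lem-phemd} separately, since they concern different quantities.

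\textbf{Part (\rmnum{1}).} This follows immediately from the construction in Remark~\ref{rem-tree}. Recall that $\tilde{\Delta}_A = \|p-p'\|$ for an arbitrary $p\in A$ and its farthest point $p'\in A$; by the triangle inequality through the center of the minimum enclosing ball of $A$ we get $\tilde{\Delta}_A \in [\Delta_A, 2\Delta_A]$, and likewise $\tilde{\Delta}_B\in[\Delta_B, 2\Delta_B]$. Taking the maximum, $\tilde{\Delta} = \max\{\tilde{\Delta}_A,\tilde{\Delta}_B\} \geq \max\{\Delta_A,\Delta_B\} = \Delta$, and $\tilde{\Delta} \leq 2\max\{\Delta_A,\Delta_B\} = 2\Delta$. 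So $\tilde{\Delta}\in[\Delta,2\Delta]$.

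\textbf{Part (\rmnum{2}).} The subtlety here — flagged in the paragraph just before the lemma — is that the minimum enclosing ball of $P = A\cup B$ may have radius far larger than $\Delta$, so one cannot simply invoke Theorem~\ref{the-tree} with the ambient radius. The key observation is that $P$ is covered by the union of two balls $Ball(o_A,\Delta_A)$ and $Ball(o_B,\Delta_B)$, where $o_A,o_B$ are the centers of the minimum enclosing balls of $A$ and $B$; hence $P$ is covered by two balls of radius at most $\Delta$. Now I would run the same argument that underlies Algorithm~\ref{alg-hg}: at the root we execute Gonzalez's algorithm for $2^{2(\rho+1)}$ rounds on $P$. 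Since $P$ (having doubling dimension at most $\rho+1$ by Claim~\ref{cla-dd}) is covered by two balls of radius $\Delta$, applying Definition~\ref{def-dd} once to each of these two balls shows $P$ is covered by $2\cdot 2^{\rho+1} = 2^{\rho+2} \leq 2^{2(\rho+1)}$ balls of radius $\Delta$. Then the pigeonhole/triangle-inequality dichotomy from the derivation of~(\ref{for-tree1}) applies verbatim: either the $2^{2(\rho+1)}$ Gonzalez centers $s_1,\dots,s_{2^{2(\rho+1)}}$ land in distinct covering balls — so by the triangle inequality every point of $P$ is within $2\Delta$ of some $s_j$ — or two centers fall in a common ball of radius $\Delta$, so $\min_{j\neq j'}\|s_j-s_{j'}\|\leq 2\Delta$, and then the Gonzalez property $\min_j\|p-s_j\| \leq \min_{j\neq j'}\|s_j-s_{j'}\|$ gives every $p\in P$ within $2\Delta$ of some center. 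Either way, the first level decomposes $P$ into $2^{2(\rho+1)}$ pieces each covered by a ball of radius $2\Delta$.

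The main obstacle — really the only one — is being careful about the covering number at the root: the naive application of Definition~\ref{def-dd} wants to start from a single enclosing ball of $P$, which here is uncontrolled. The fix, as above, is to cover $P$ by the two per-set enclosing balls first and only then apply the doubling property, absorbing the extra factor of $2$ into $2^{\rho+2}\leq 2^{2(\rho+1)}$. One should also note that the radius bound obtained is $2\Delta$ rather than $\Delta/2$ (as a naive reading of Algorithm~\ref{alg-hg}'s per-level halving might suggest), precisely because the effective starting radius is $\Delta$, not $\tilde\Delta$, and Gonzalez only guarantees covering radius equal to — not half of — the separation/covering scale; this accounts for the constant shift in the tree height stated in Theorem~\ref{the-hemd}.
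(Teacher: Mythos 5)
Your proof is correct and follows essentially the same route as the paper: part (\rmnum{1}) is the same direct computation, and part (\rmnum{2}) rests on the same key observation that $P=A\cup B$ is covered by the two minimum enclosing balls of radius at most $\Delta$, so Gonzalez's algorithm with $2^{2(\rho+1)}\geq 2$ centers achieves covering radius at most $2\Delta$ (the paper simply cites the $2$-approximation guarantee of Gonzalez for $k$-center, whereas you re-derive it via the pigeonhole/triangle-inequality dichotomy). The intermediate application of the doubling property is superfluous (and its stated output radius should be $\Delta/2$, not $\Delta$), but this does not affect the argument since two covering balls already suffice.
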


\begin{proof}
 It is easy to prove the statement (\rmnum{1}). Since $\tilde{\Delta}_{A}\in [\Delta_A, 2\Delta_A]$ and $\tilde{\Delta}_{B}\in [\Delta_B, 2\Delta_B]$, we directly have $\tilde{\Delta}=\max\{\tilde{\Delta}_A, \tilde{\Delta}_B\}\in [\Delta, 2\Delta]$. 

We can view the set $P=A\cup B$ as an instance of $2$-center clustering where each of $A$ and $B$ can be covered by a ball with radius $\leq \Delta$. So if we run the Gonzalez's algorithm $2^{2(\rho+1)}\geq 2$ rounds, we obtain a set of $2^{2(\rho+1)}$ balls with radius $\leq 2\Delta$ (since the Gonzalez's algorithm yields a $2$-approximation of $k$-center clustering). Thus the statement~(\rmnum{2}) is true.
\end{proof}

\vspace{-0.1in}
\begin{lemma}
\label{lem-hemd}
In Algorithm~\ref{alg-hemd}, for each $1\leq i\leq \log \frac{2}{\epsilon}+5$, $\mathcal{EMD}(A_i, B_i)\in \mathcal{EMD}(A, B)\pm \frac{1}{2^{i-3}}\Delta$.
\end{lemma}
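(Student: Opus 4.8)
The plan is to sandwich $\mathcal{EMD}(A_i,B_i)$ between $\mathcal{EMD}(A,B)\pm\frac{1}{2^{i-3}}\Delta$ by passing through one intermediate instance obtained by ``snapping'' every input point to the representative of the level-$i$ node that contains it. The first step is to pin down the node radius at level $i$: by the second part of Lemma~\ref{lem-phemd} each level-$1$ node lies inside a ball of radius at most $2\Delta$, and the covering argument of Section~\ref{sec-bc} (apply Definition~\ref{def-dd} twice and then use the Gonzalez property as in~(\ref{for-tree1})) halves the radius at every deeper level, so each level-$i$ node $v^j_i$ satisfies $P_{v^j_i}\subseteq Ball\big(p_{v^j_i},r_i\big)$ with $r_i=\Delta/2^{i-2}$; in particular $2r_i=\Delta/2^{i-3}$ is exactly the claimed error, and every point of $A\cap P_{v^j_i}$ (resp.\ of $B\cap P_{v^j_i}$) is within distance $r_i$ of $p_{v^j_i}$.

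Next I would introduce two auxiliary weighted sets on the representatives: $A'_i$ places weight $n^j_i$ at $p_{v^j_i}$ and $B'_i$ places weight $m^j_i$ at $p_{v^j_i}$, so both have total weight $W$. Writing $\mathrm{cost}(\cdot,\cdot)=W\cdot\mathcal{EMD}(\cdot,\cdot)$ for the unnormalized transportation cost, the key estimate is the ``snapping'' bound: moving every unit of mass by at most $r_i$ changes $\mathrm{cost}$ by at most $Wr_i$, because a feasible flow transfers verbatim to the moved point locations and each edge length changes by at most $r_i$ (triangle inequality), so the total cost changes by at most $\big(\sum_{k,l}f_{kl}\big)r_i=Wr_i$. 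Since $A'_i$ is exactly $A$ with each point moved to its node's representative (and co-located copies merged), and likewise for $B'_i$ and $B$, applying this bound once for $A\to A'_i$ and once for $B\to B'_i$ gives $\big|\mathrm{cost}(A'_i,B'_i)-\mathrm{cost}(A,B)\big|\le 2Wr_i$.

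It remains to show $\mathrm{cost}(A'_i,B'_i)=\mathrm{cost}(A_i,B_i)$, i.e.\ that deleting the $\min(n^j_i,m^j_i)$ units of co-located supply and demand at each $p_{v^j_i}$ — which is exactly how Step~2(c) produces $A_i,B_i$ from $A'_i,B'_i$ — leaves the optimal cost unchanged. One inequality is immediate: any feasible flow for $(A_i,B_i)$ extends to one for $(A'_i,B'_i)$ of equal cost by adding zero-length self-loops at the $p_{v^j_i}$'s. For the other, take an optimal flow $F'$ for $(A'_i,B'_i)$ and use a standard exchange step to make it route $\min(n^j_i,m^j_i)$ units from the $A'_i$-copy of $p_{v^j_i}$ directly to the $B'_i$-copy: as long as that self-flow is strictly smaller, there is positive flow out of $p_{v^j_i}$ to some $p_{v^{j'}_i}$ and positive flow into $p_{v^j_i}$ from some $p_{v^{j''}_i}$, and moving a small common amount $\eta$ of both into the self-loop and into the direct edge $p_{v^{j''}_i}\to p_{v^{j'}_i}$ changes the cost by $\eta\big(d(p_{v^{j''}_i},p_{v^{j'}_i})-d(p_{v^{j''}_i},p_{v^j_i})-d(p_{v^j_i},p_{v^{j'}_i})\big)\le 0$; iterating drains all co-located mass and leaves a feasible flow for $(A_i,B_i)$ of the same cost. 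Combining the two bounds and dividing by $W$ gives $\mathcal{EMD}(A_i,B_i)\in\mathcal{EMD}(A,B)\pm 2r_i=\mathcal{EMD}(A,B)\pm\frac{1}{2^{i-3}}\Delta$ for every $i\ge1$, hence throughout $1\le i\le\log\frac{2}{\epsilon}+5$. (When $W_i=0$, i.e.\ $n^j_i=m^j_i$ for all $j$, this is trivial since $A'_i\equiv B'_i$ and all costs vanish.)

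The only non-routine step is the cancellation identity $\mathrm{cost}(A'_i,B'_i)=\mathrm{cost}(A_i,B_i)$, and I expect the main care to go into (a) phrasing the rerouting argument so that it terminates and stays feasible, and (b) fixing the normalization convention: Definition~\ref{def-emd} divides by the total weight, so one must read $\mathcal{EMD}(A_i,B_i)$ as $\frac{1}{W}\mathrm{cost}(A_i,B_i)$ (equivalently, regard $A_i,B_i$ as still carrying the cancelled co-located mass, so their total weight is $W$); normalizing by the reduced weight $W_i$ instead would break the lemma whenever a large fraction of the mass cancels, so I would fix this convention at the outset of the proof.
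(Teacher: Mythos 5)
Your proof follows essentially the same route as the paper's: you introduce the same intermediate instance (mass snapped to the level-$i$ representatives, the paper's $\tilde{A}_i,\tilde{B}_i$), obtain the $\pm 2r_i=\pm\frac{1}{2^{i-3}}\Delta$ bound by the same triangle-inequality argument, and establish the cancellation identity by the same flow-rerouting exchange that the paper isolates as Claim~\ref{cla-hemd} and proves in the supplement. The argument is correct; your explicit remark that the normalization must stay at $W$ rather than the reduced weight is a point the paper glosses over, and is worth keeping.
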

\vspace{-0.1in}
\begin{proof}
First, we consider another two sets of points $\tilde{A}_i$ and $\tilde{B}_i$, where each of them contains the same set of points $\{p_{v^1_i}, p_{v^2_i}, \cdots, p_{v^N_i}\}$. To differentiate the points in $\tilde{A}_i$ and $\tilde{B}_i$, we denote each point $p_{v^j_i}$ as $a^j_i$ ({\em resp.,} $b^j_i$) in $\tilde{A}_i$ ({\em resp.,} $\tilde{B}_i$). For the set $\tilde{A}_i$, each point $a^j_i$ is associated with the weight $n^j_i$; similarly, each point $b^j_i$ of $\tilde{B}_i$ has the weight $m^j_i$. We can imagine that each $a^j_i$ is a set of $n^j_i$ unit-weight overlapping points; namely, there is a bijection between ``$a^j_i$'' and $A\cap P_{v^j_i}$. The similar bijection also exists between ``$b^j_i$'' and $B\cap P_{v^j_i}$. Moreover, since the whole set $P_{v^j_i}$ is covered by a ball with radius $\frac{\Delta}{2^{i-2}}$, through the triangle inequality, we have
\begin{eqnarray}
\mathcal{EMD}(\tilde{A}_i, \tilde{B}_i)\in \mathcal{EMD}(A, B)\pm \frac{1}{2^{i-2}}\Delta\times 2
=\mathcal{EMD}(A, B)\pm \frac{1}{2^{i-3}}\Delta. \label{for-hemd1}
\end{eqnarray}
Next, we only need to prove $\mathcal{EMD}(A_i, B_i)=\mathcal{EMD}(\tilde{A}_i, \tilde{B}_i)$. 

\vspace{-0.05in}
\begin{claim}
\label{cla-hemd}
There exists a set of flows $\tilde{F}=\{\tilde{f}_{jl}\mid 1\leq j, l\leq N\}$ yielding the optimal EMD from $\tilde{A}_i$ to $\tilde{B}_i$, such that for any $1\leq j\leq N$, $\tilde{f}_{jj}=\min\{n^j_i, m^j_i\}$.
\end{claim}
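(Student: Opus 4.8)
The plan is to take an arbitrary optimal flow $F^\ast=\{f^\ast_{jl}\}$ realizing $\mathcal{EMD}(\tilde A_i,\tilde B_i)$ and repeatedly push as much flow as possible onto each ``diagonal'' edge $(a^j_i,b^j_i)$ without ever increasing the total cost. The structural fact that makes this work is that $a^j_i$ and $b^j_i$ are the \emph{same} point $p_{v^j_i}$: writing $p_j:=p_{v^j_i}$, we have $\|a^j_i-b^j_i\|=0$ and every edge length equals a distance between two of the $p_j$'s, so the triangle inequality $\|p_k-p_l\|\le\|p_k-p_j\|+\|p_j-p_l\|$ will be exactly what is needed. Note also that any feasible flow satisfies $f_{jj}\le\min\{n^j_i,m^j_i\}$ (it cannot exceed the supply at $a^j_i$ nor the demand at $b^j_i$), so the claim says precisely that we can push every diagonal entry to its cap.

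The elementary move is as follows. Fix $j$ and assume without loss of generality $n^j_i\le m^j_i$ (the reverse case is symmetric), so the target is $f_{jj}=n^j_i$. While $f_{jj}<n^j_i$, feasibility forces some $l\ne j$ with $f_{jl}>0$ and some $k\ne j$ with $f_{kj}>0$; set $\delta=\min\{f_{jl},f_{kj}\}>0$, decrease $f_{jl}$ and $f_{kj}$ by $\delta$, and increase $f_{jj}$ and $f_{kl}$ by $\delta$. Flow conservation at every node is preserved, and the cost changes by $\delta\big(\|p_k-p_l\|-\|p_j-p_l\|-\|p_k-p_j\|\big)\le 0$ (using $\|p_j-p_j\|=0$), so since $F^\ast$ was optimal the new flow is again optimal.

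For termination, observe that a base-$j$ move strictly decreases the number of strictly positive off-diagonal edges lying in ``row $j$ or column $j$'': it zeroes at least one of $f_{jl},f_{kj}$, both off-diagonal since $l,k\ne j$, while the only increased entries $f_{jj}$ and $f_{kl}$ (with $k,l\ne j$) lie outside that row and column. This count is a nonnegative integer bounded by $2N$, so after at most $2N$ moves it reaches $0$, which forces $f_{jj}=n^j_i=\min\{n^j_i,m^j_i\}$. Moreover a move never decreases any diagonal entry, and every intermediate flow is feasible so $f_{j'j'}\le\min\{n^{j'}_i,m^{j'}_i\}$ throughout; hence once a diagonal entry has been pushed up to $\min\{n^{j'}_i,m^{j'}_i\}$ it remains pinned there. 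We may therefore process $j=1,2,\dots,N$ in turn, and the final flow $\tilde F$ is feasible, has cost $\mathcal{EMD}(\tilde A_i,\tilde B_i)$, and satisfies $\tilde f_{jj}=\min\{n^j_i,m^j_i\}$ for all $j$, as claimed.

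The cost estimate itself is an immediate triangle inequality; the part that needs care is the combinatorial bookkeeping of the inner loop — choosing a monotone \emph{integer} potential that certifies termination even though $\delta$ may keep shrinking — together with the check that already-maximized diagonal entries cannot be perturbed later, which follows from feasibility plus the fact that no move ever lowers a diagonal entry.
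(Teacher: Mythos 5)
Your proof is correct and follows essentially the same route as the paper's: reroute flow onto the diagonal edge via an exchange move whose cost change is nonpositive by the triangle inequality (your single move with $\delta=\min\{f_{jl},f_{kj}\}$ unifies the paper's two cases $k=l$ and $k\neq l$). Your termination argument via the integer potential counting positive off-diagonal entries in row $j$ and column $j$ is in fact tighter than the paper's ``finite number of steps'' claim, since it correctly accounts for the newly created positive entry $f_{kl}$ lying outside that row and column.
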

\vspace{-0.05in}

The proof of Claim~\ref{cla-hemd} is placed to our supplement. Claim~\ref{cla-hemd} indicates that the flow from $a^j_i$ to $b^j_i$ is $\min\{n^j_i, m^j_i\}$. Without loss of generality, we assume $n^j_i\leq m^j_i$; then we can safely delete the point $a^j_i$ and replace $m^j_i$ by $m^j_i-n^j_i$ without changing the value of $\mathcal{EMD}(\tilde{A}_i, \tilde{B}_i)$. If we perform this change for each pair $(a^j_i, b^j_i)$ for $1\leq j\leq N$, the point sets $\tilde{A}_i$ and $\tilde{B}_i$ will become $A_i$ and $B_i$ eventually. Therefore, $\mathcal{EMD}(A_i, B_i)=\mathcal{EMD}(\tilde{A}_i, \tilde{B}_i)$, and consequently (\ref{for-hemd1}) implies Lemma~\ref{lem-hemd} is true.
\end{proof}

\begin{proof} 
(\textbf{of Theorem~\ref{the-hemd}})
At the $i$-th level in the tree $\mathcal{H}$, we have $\mathcal{EMD}(A_i, B_i)\in \mathcal{EMD}(A, B)\pm \frac{1}{2^{i-3}}\Delta$ via Lemma~\ref{lem-hemd}. Also, since $\tilde{\Delta}/2\leq\Delta\leq \tilde{\Delta}$, we know that 
\begin{eqnarray}
\mathcal{EMD}(A, B)-\frac{1}{2^{i-3}}\tilde{\Delta}\leq \mathcal{EMD}(A_i, B_i)\leq \mathcal{EMD}(A, B)+\frac{1}{2^{i-3}}\tilde{\Delta}. \label{for-hemd2}
\end{eqnarray}

If the first event  or third event happens, the left hand-side of (\ref{for-hemd2}) implies 
$\mathcal{EMD}(A_i, B_i)>T-\frac{1}{2^{i-3}}\tilde{\Delta}$. 
So the algorithm will never output ``case 2''. Moreover, for the first event ``$\mathcal{EMD}(A, B)> T+ \epsilon \Delta$'', the bound of the height $\min\{\log\frac{1}{\epsilon}, \log\frac{\Delta}{\delta}\}+5=\log\frac{\Delta}{\delta}+5$; when $i$ reaches $\log\frac{\Delta}{\delta}+5$ 
, we have  
\begin{eqnarray}
\mathcal{EMD}(A_i, B_i)&\geq&\mathcal{EMD}(A, B)-\frac{1}{2^{i-3}}\tilde{\Delta}=T+\delta-\frac{1}{2^{i-3}}\tilde{\Delta}\label{for-the-hemd1}\\
\text{and }\hspace{0.1in}\delta&=&\frac{\Delta}{2^{i-5}}\geq\frac{\tilde{\Delta}}{2^{i-4}}.\label{for-the-hemd2}
\end{eqnarray}
The inequality of (\ref{for-the-hemd1}) comes from the left hand-side of (\ref{for-hemd2}). Combining (\ref{for-the-hemd1}) and (\ref{for-the-hemd2}), we have $\mathcal{EMD}(A_i, B_i)\geq T+\frac{1}{2^{i-3}}\tilde{\Delta}$. That is, the algorithm will output ``case 1'' before $i$ exceeds $\log\frac{\Delta}{\delta}+5$.

Similarly, for the second event ``$\mathcal{EMD}(A, B)< T- \epsilon \Delta$'', the algorithm will output ``case~2'' before $i$ exceeds $\log\frac{\Delta}{\delta}+5$.

For the third event ``$\mathcal{EMD}(A, B)\in [T, T+ \epsilon \Delta]$'', the bound of the height $\min\{\log\frac{1}{\epsilon}, \log\frac{\Delta}{\delta}\}+5=\log\frac{1}{\epsilon}+5$. The algorithm could output ``case~1''  before $i$ reaches $\log \frac{1}{\epsilon}+5$. It is also possible that the algorithm keeps running until $i=\log \frac{1}{\epsilon}+5$, and then it will output ``case~3''. Similarly, we can prove the output for the fourth event ``$\mathcal{EMD}(A, B)\in [ T- \epsilon \Delta, T]$''. 
\end{proof}
\vspace{-0.15in}

\section{Experiments}
\label{sec-exp}
All the experimental results were obtained on a server equipped with 2.4GHz Intel CPU and 8GB main memory; the algorithms are implemented in Matlab R2019a. 
As discussed in Remark~\ref{rem-the-hemd} (\rmnum{2}), we can apply any existing EMD algorithm as the sub-routine to compute $\mathcal{EMD}(A_i, B_i)$ in Step 2(d) of our  Algorithm~\ref{alg-hemd}. 
In our experiments, we use two widely used EMD algorithms, the \textbf{\textsc{Network Simplex}} algorithm~\cite{ahuja1993network} and the \textbf{\textsc{Sinkhorn}} algorithm~\cite{DBLP:conf/nips/Cuturi13}, as the sub-routine algorithms. In fact, we also considered the well-known EMD algorithm \textbf{\textsc{FastEMD}}~\cite{pele2009fast}, but it runs very slowly for high-dimensional data ({\em e.g.,} it takes several hours for computing the EMD over the datasets considered in our experiments). 
Given an instance $(A, B, T)$, we let $\mathtt{time_{our}}$ be the running time of our algorithm, and $\mathtt{time_{net}}$ ({\em resp.,} $\mathtt{time_{sin}}$) be the running time of computing $\mathcal{EMD}(A, B)$ by using \textsc{Network Simplex} ({\em resp.,} \textsc{Sinkhorn}); we use the ratios $\mathtt{time_{our}/time_{net}}$ and $\mathtt{time_{our}/time_{sin}}$ to measure the performance of our algorithm (the lower the ratio, the better the performance).

\textbf{Datasets.} We implement our proposed algorithm and study its performances on both the synthetic and real datasets as listed in Table~\ref{datasets info}. 
To construct a synthetic dataset, we take the random samples from two randomly generated manifolds in $\mathbb{R}^{500}$, where each manifold is represented by a polynomial function with low degree ($\leq 50$). 
 Note that it is challenging to achieve the exact doubling dimensions of the datasets,  so we use the degree of the polynomial function as a ``rough  indicator'' for the doubling dimension (the higher the degree, the larger the doubling dimension). We also use two popular benchmark datasets, the \textbf{MNIST} dataset~\cite{lecun1998gradient} and \textbf{CIFAR-10} dataset~\cite{krizhevsky2009learning}. 
Following Remark~\ref{rem-tree} and Remark~\ref{rem-the-hemd}(\rmnum{1}), our algorithm does not require that the doubling dimension $\rho$ is given.

\begin{table}
  \caption{The Datasets.}
  \centering
  \begin{tabular}{llll}
    \toprule
    \cmidrule(r){1-4}
    Datasets      & Data size      & Dimension      & Type    \\
    \midrule
    SYNTHETIC & $80,000$      & $500$           &  Synthetic   \\
    MNIST        & $60,000$      & $784$            &  Image      \\
    CIFAR-10    & $60,000$      & $3072$          &  Image \\
    \bottomrule
  \end{tabular}
  \label{datasets info}
\end{table}

\textbf{Setup.} We set the threshold $T=2^\theta\cdot \mathcal{EMD}(A, B)$ and vary the parameter $\theta$ from $-10$ to $10$; we set $\epsilon$, as the error parameter, to be $0.01, 0.03$, and $0.05$. For each instance $(A, B, T)$, we first use \textsc{Network Simplex} and \textsc{Sinkhorn} to compute their EMD and obtain $\mathtt{time_{net}}$ and $\mathtt{time_{sin}}$, respectively. 

%

\textbf{Results and analysis.} 
We illustrate the experimental results  in Figure~\ref{fig-syn}, ~\ref{fig-minist}, ~\ref{fig-cifar}, and ~\ref{fig-synid} .

\begin{itemize}
\item We can see that the running time ratios are lower than $0.3$ on the synthetic datasets and $0.38$ on the real datasets, which indicates that our algorithm can save respectively at least $70\%$ and $62\%$ of the running time on the synthetic and real datasets, comparing with directly computing the EMD. Furthermore, when we tune the parameter $\theta$ to be close to $0$ ({\em i.e.,} the threshold $T$ is close to $\mathcal{EMD}(A, B)$), the running time increases because the height of the tree $\mathcal{H}$ becomes high. When $T$ is far from $\mathcal{EMD}(A, B)$, the curves become flat, because $\mathcal{H}$'s height remains the same and the running time is dominated by the construction time of $\mathcal{H}$.

%


\item We show the average running time ratio and  standard deviation of $\mathtt{time_{our}/time_{net}}$ and $\mathtt{time_{our}/time_{sin}}$ in Figure~\ref{fig-syn}-\ref{fig-cifar}'s (b) and (d). When the size $n$ increases, the ratios substantially decrease, which indicates that our method enjoys better scalability for large-scale datasets. This is also in agreement with our theoretical analysis on the running time in Remark~\ref{rem-the-hemd}(\rmnum{2}).


\item We also vary the degree of the polynomial function for the synthetic datasets (with fixed $\epsilon=0.03$). From Figure~\ref{fig-synid} we can see that the running time of  small $\theta$ increases as the degree increases, because the complexity is largely affected by the value of $\rho$ when the tree $\mathcal{H}$ is high. On the other hand, when $\theta$ is large, $\mathcal{H}$ is low and the complexity is dominated by the running time for constructing $\mathcal{H}$ (so the curves become flat).


%
\end{itemize}

\begin{figure}[H]
\centering
\subfigure[$n=20,000$]{
\begin{minipage}[t]{0.23\linewidth}
\centering
\includegraphics[width=1.35in]{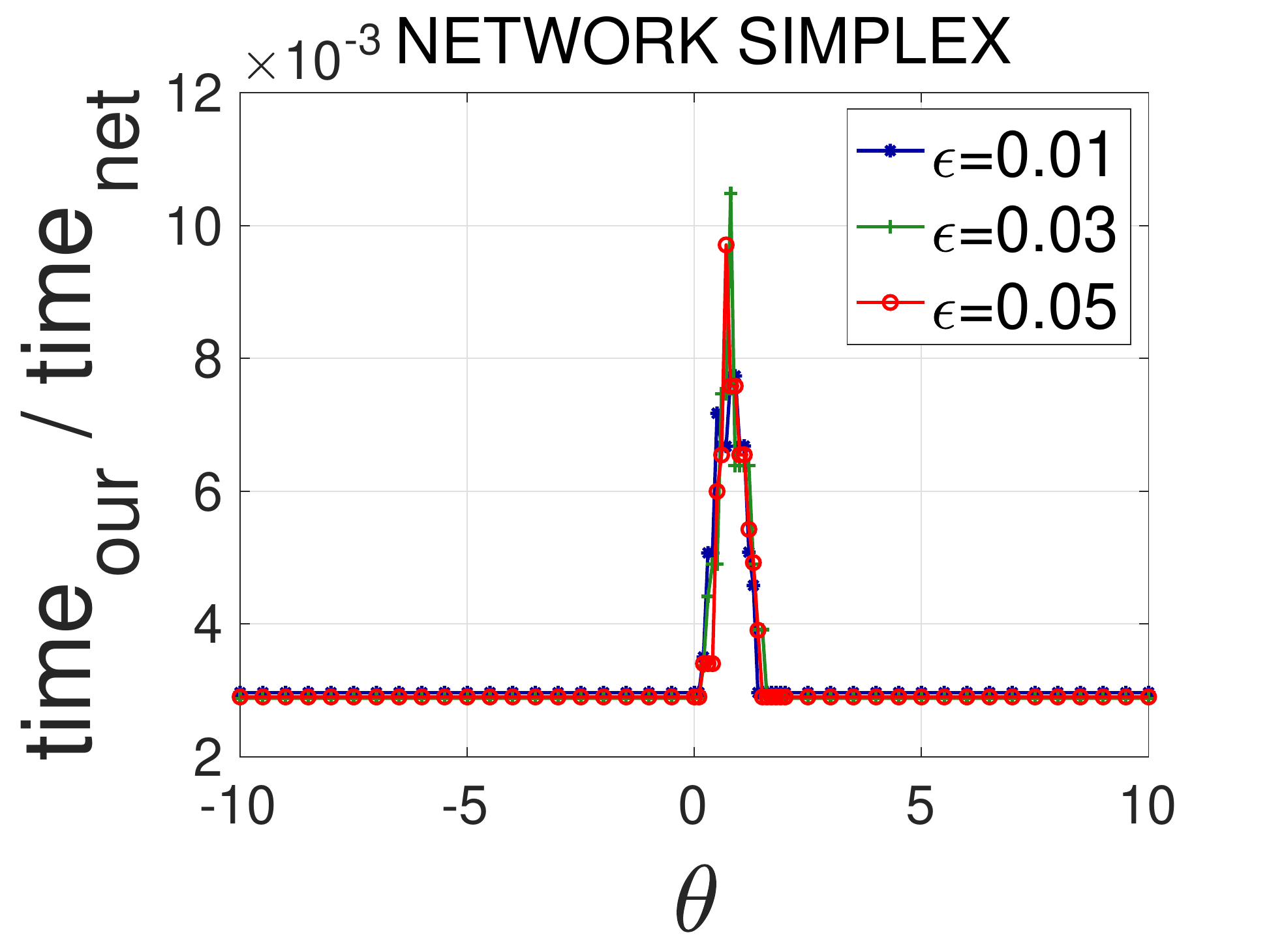}
\end{minipage}%
}%
\subfigure[]{
\begin{minipage}[t]{0.23\linewidth}
\centering
\includegraphics[width=1.35in]{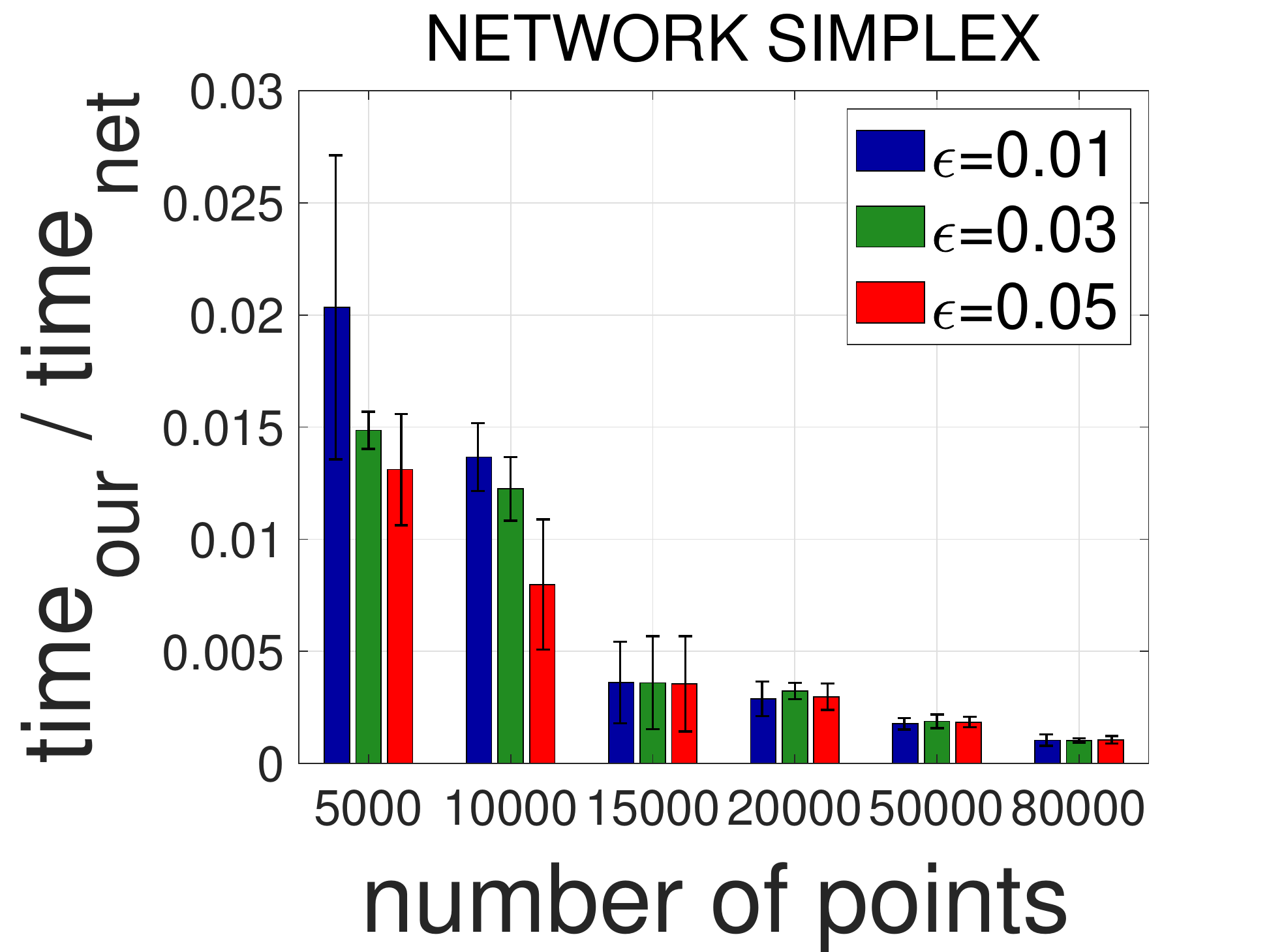}
\end{minipage}%
}%
\subfigure[$n=20,000$]{
\begin{minipage}[t]{0.23\linewidth}
\centering
\includegraphics[width=1.35in]{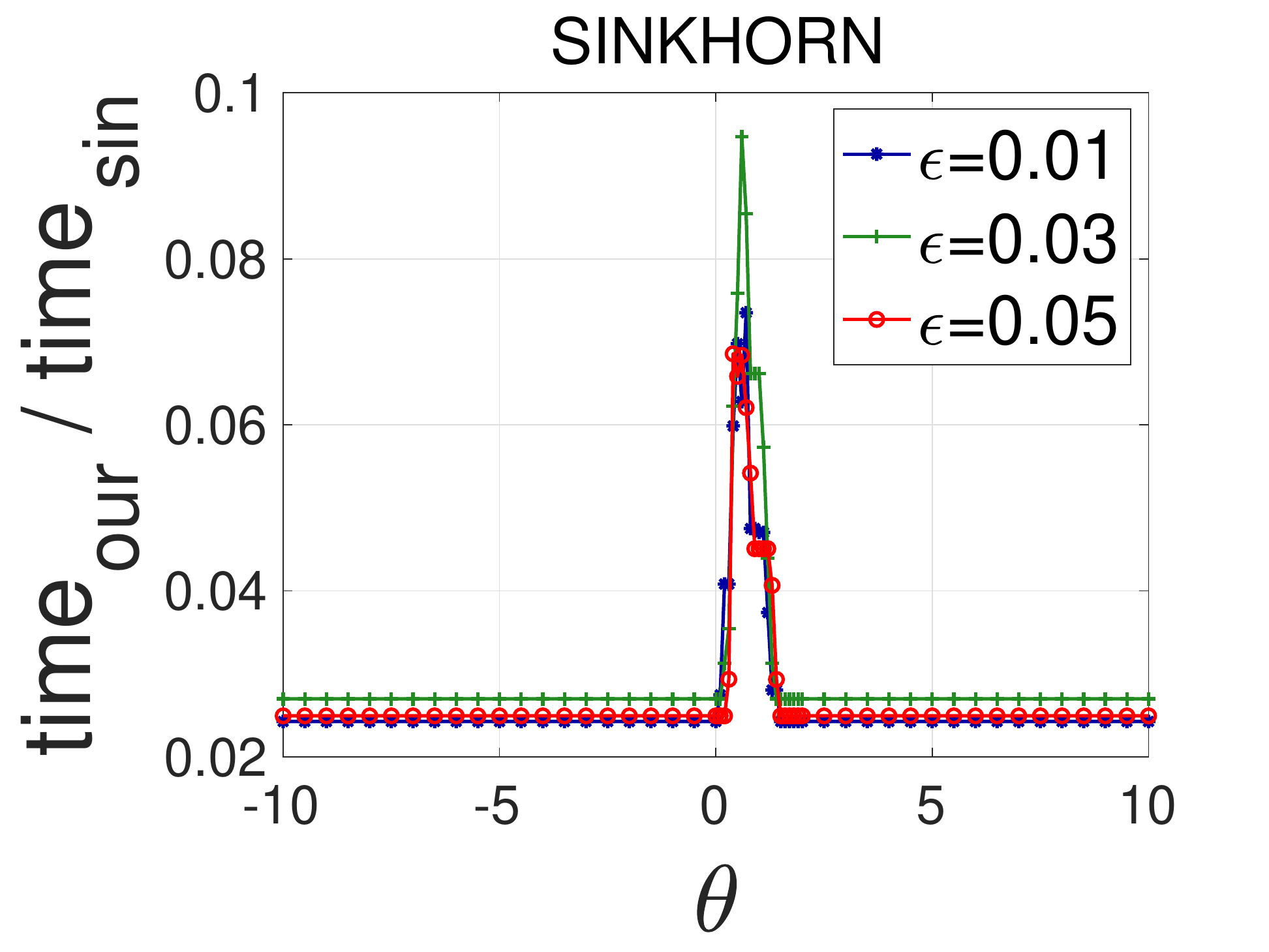}
\end{minipage}
}%
\subfigure[]{
\begin{minipage}[t]{0.23\linewidth}
\centering
\includegraphics[width=1.35in]{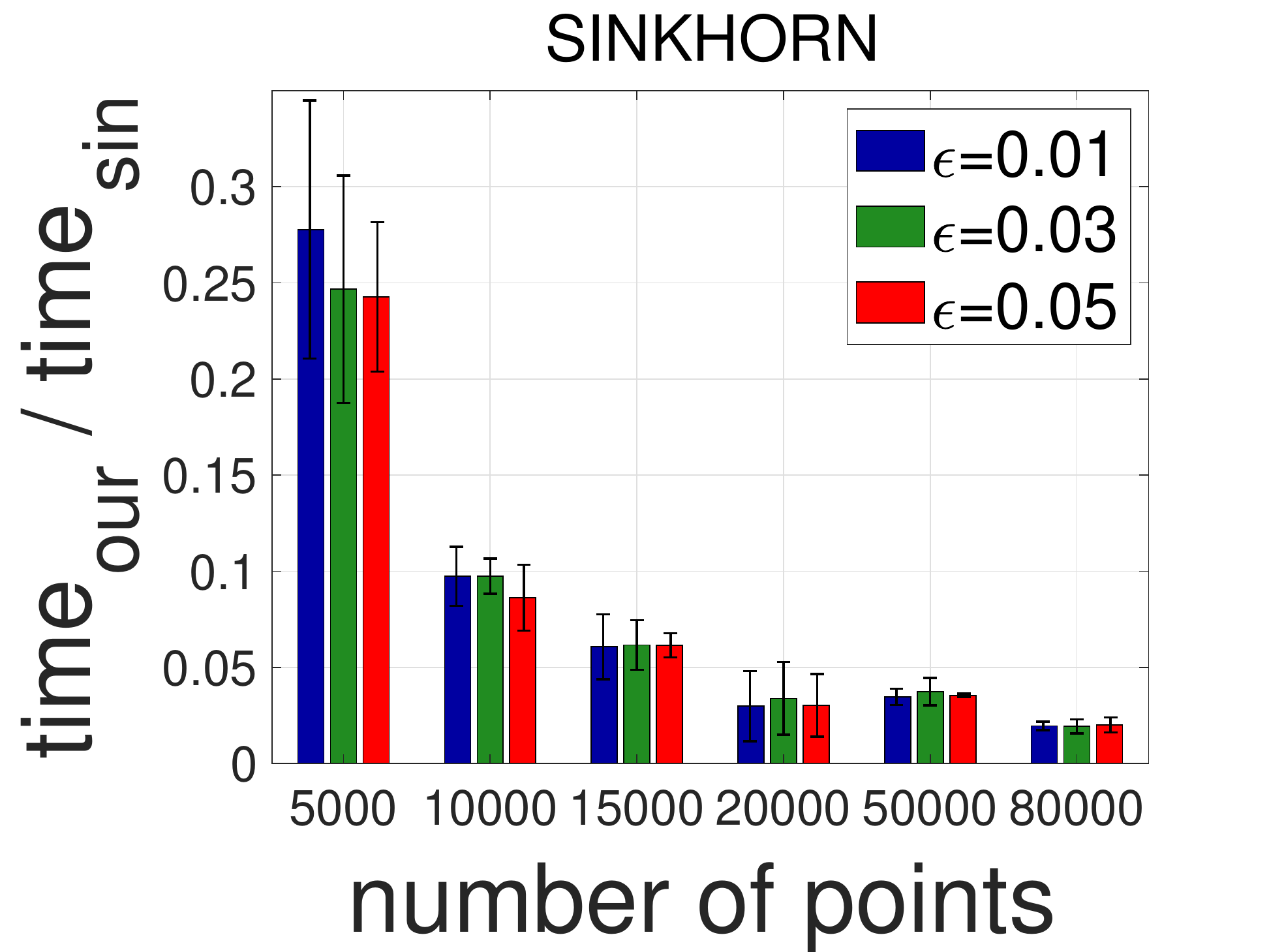}
\end{minipage}
}%
\centering
 \caption{The running time ratios on the synthetic datasets with varying the threshold $T=2^\theta\cdot \mathcal{EMD}(A, B)$ and the number of points $n=|A|+|B|$.}
  \label{fig-syn}
\end{figure}

 \begin{figure}[htb]
\centering
\subfigure[$n=20,000$]{
\begin{minipage}[t]{0.23\linewidth}
\centering
\includegraphics[width=1.35in]{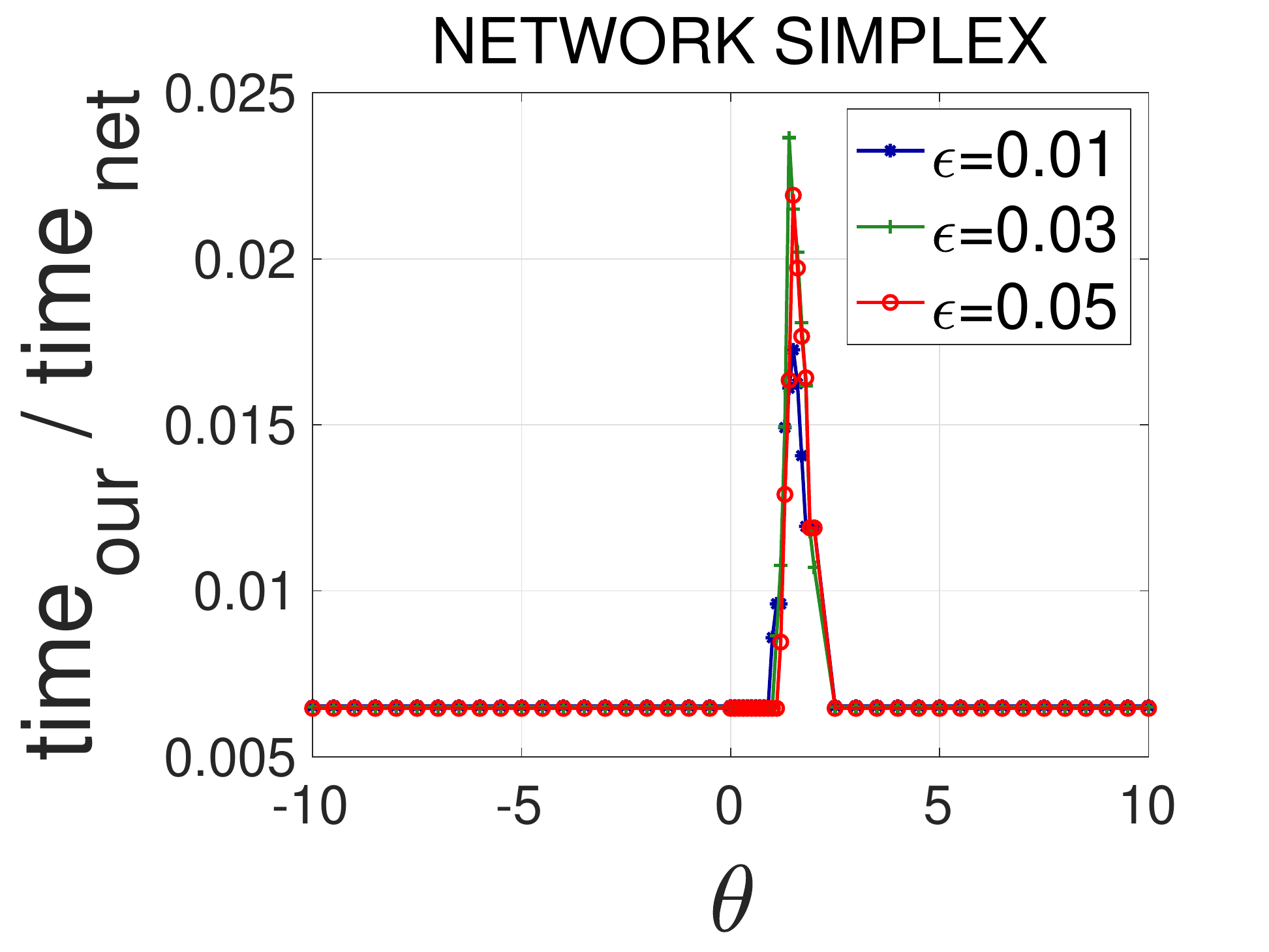}
\end{minipage}%
}%
\subfigure[]{
\begin{minipage}[t]{0.23\linewidth}
\centering
\includegraphics[width=1.35in]{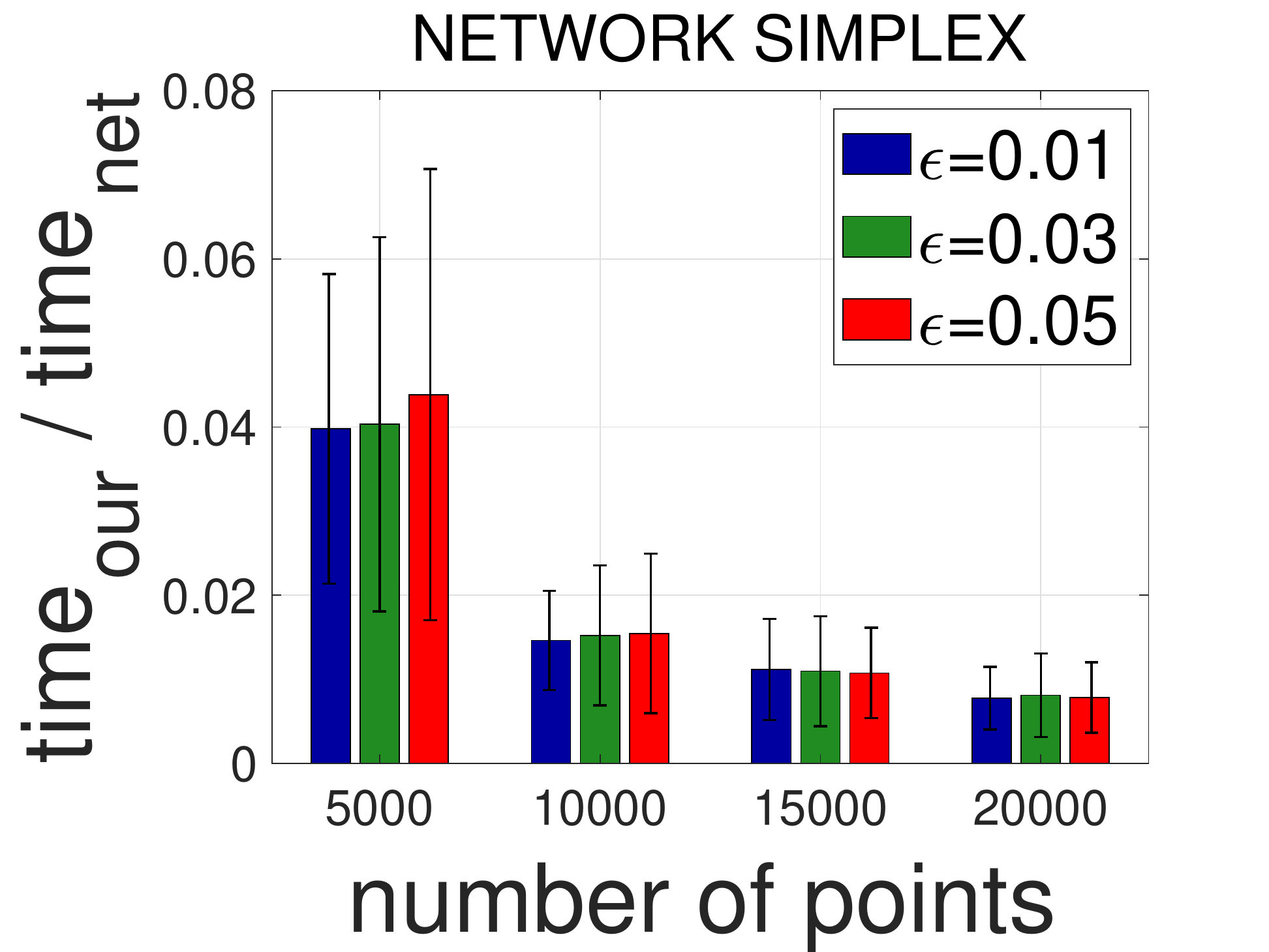}
\end{minipage}%
}%
\subfigure[$n=20,000$]{
\begin{minipage}[t]{0.23\linewidth}
\centering
\includegraphics[width=1.35in]{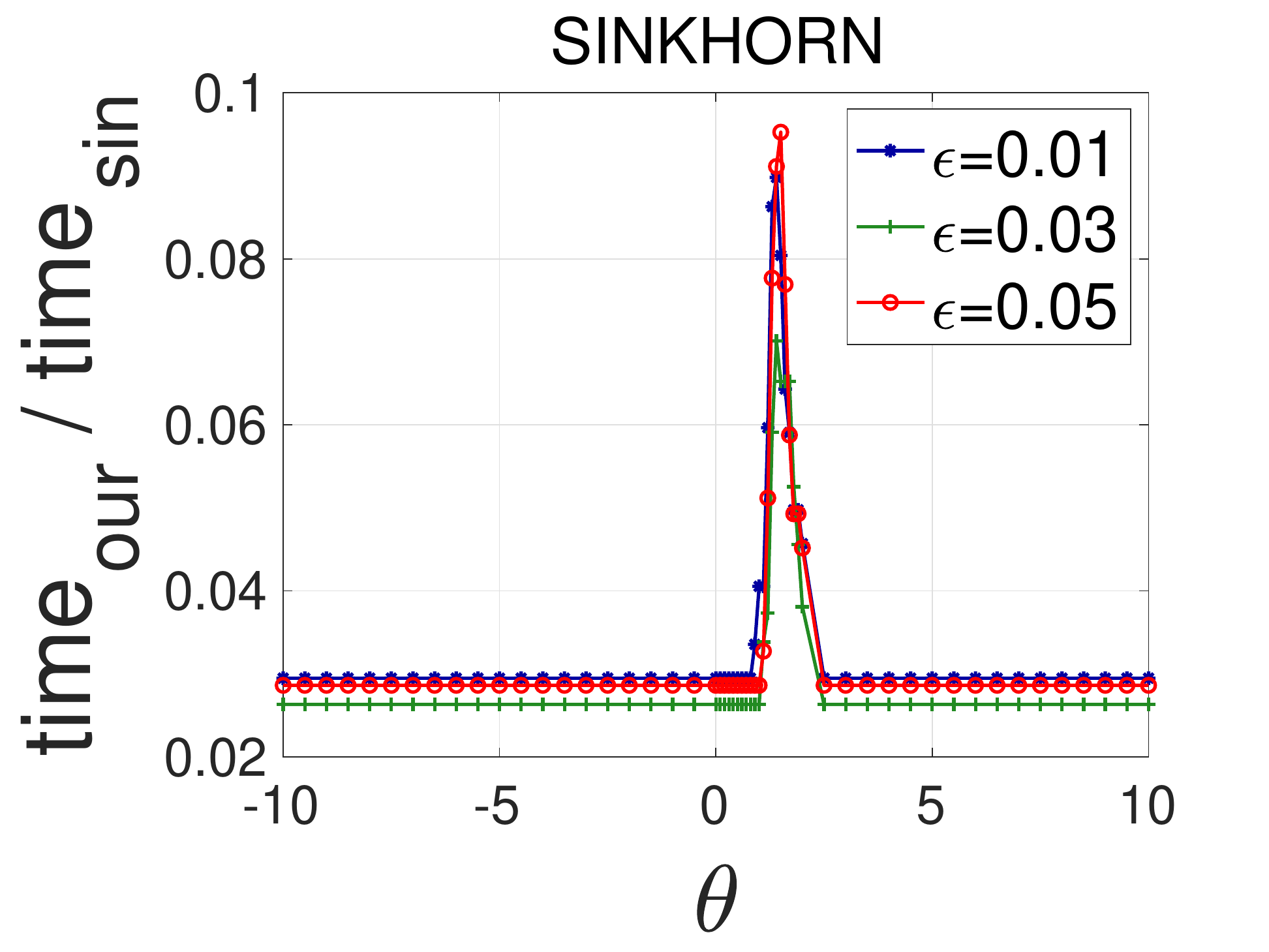}
\end{minipage}
}%
\subfigure[]{
\begin{minipage}[t]{0.23\linewidth}
\centering
\includegraphics[width=1.35in]{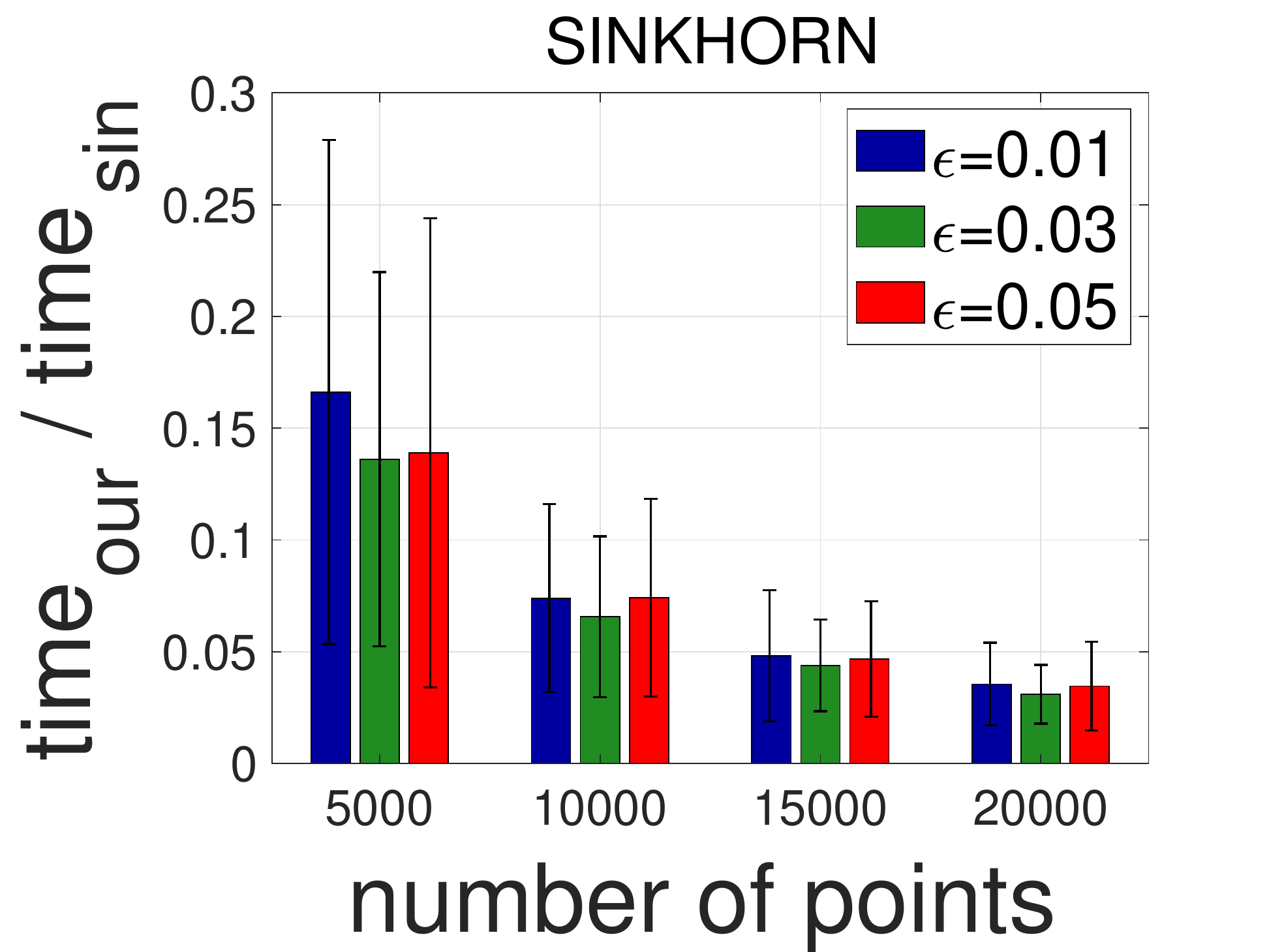}
\end{minipage}
}%
\centering
 \caption{The running time ratios on the MNIST dataset with varying the threshold $T=2^\theta\cdot \mathcal{EMD}(A, B)$ and the number of points $n=|A|+|B|$.}
  \label{fig-minist}
\end{figure}

 \begin{figure}[htb]
\centering
\subfigure[$n=20,000$]{
\begin{minipage}[t]{0.23\linewidth}
\centering
\includegraphics[width=1.35in]{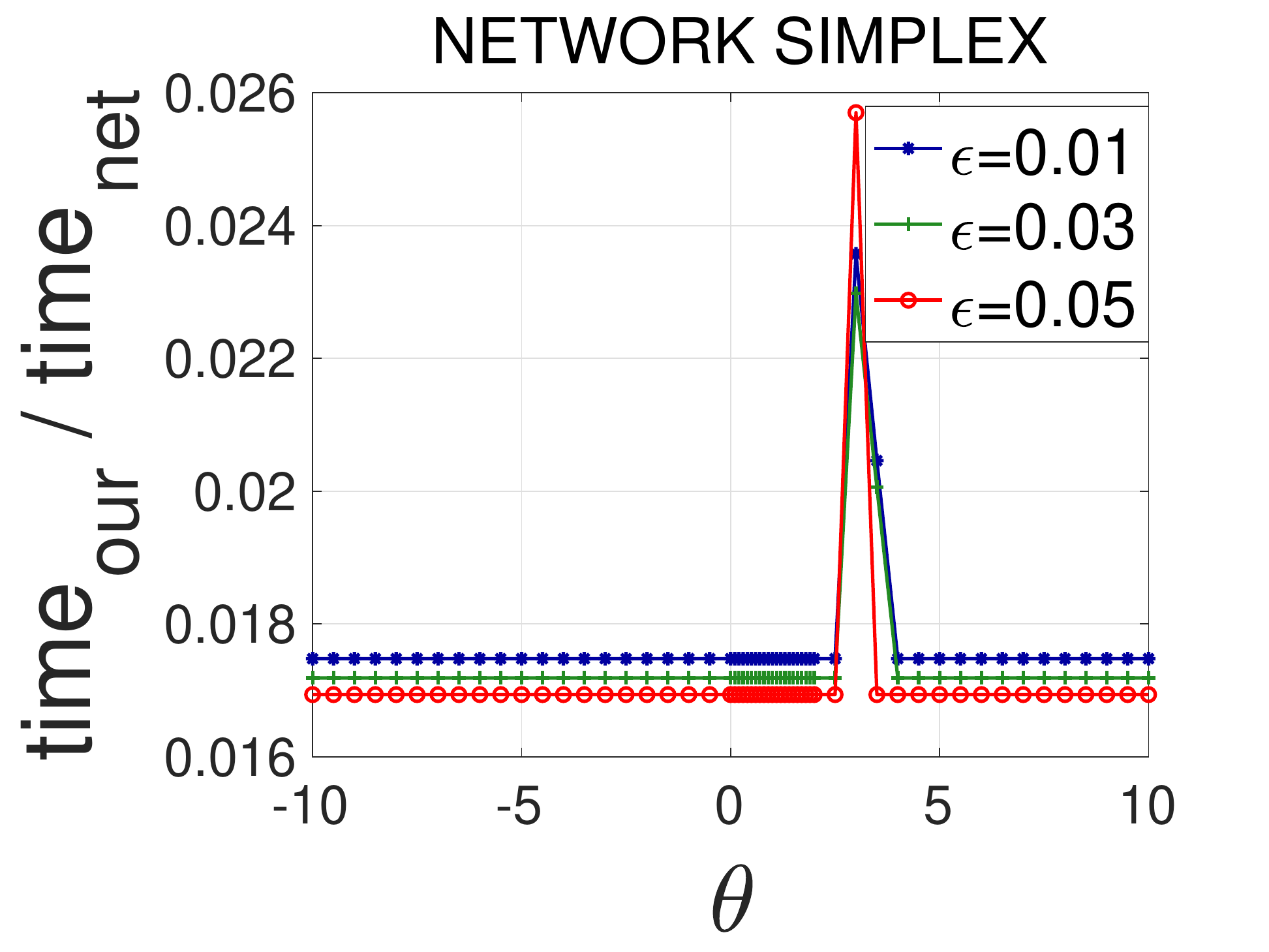}
\end{minipage}%
}%
\subfigure[]{
\begin{minipage}[t]{0.23\linewidth}
\centering
\includegraphics[width=1.35in]{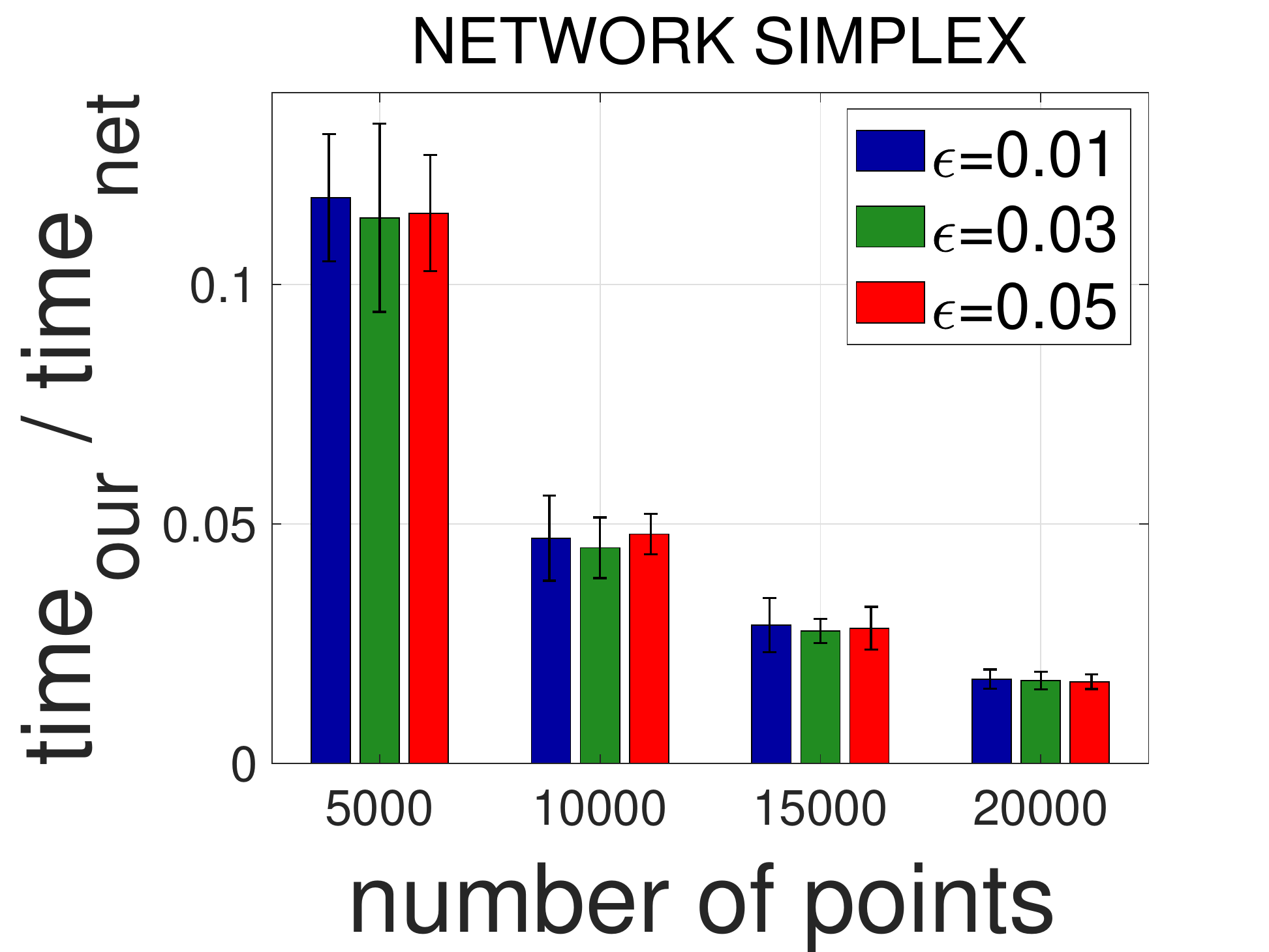}
\end{minipage}%
}%
\subfigure[$n=20,000$]{
\begin{minipage}[t]{0.23\linewidth}
\centering
\includegraphics[width=1.35in]{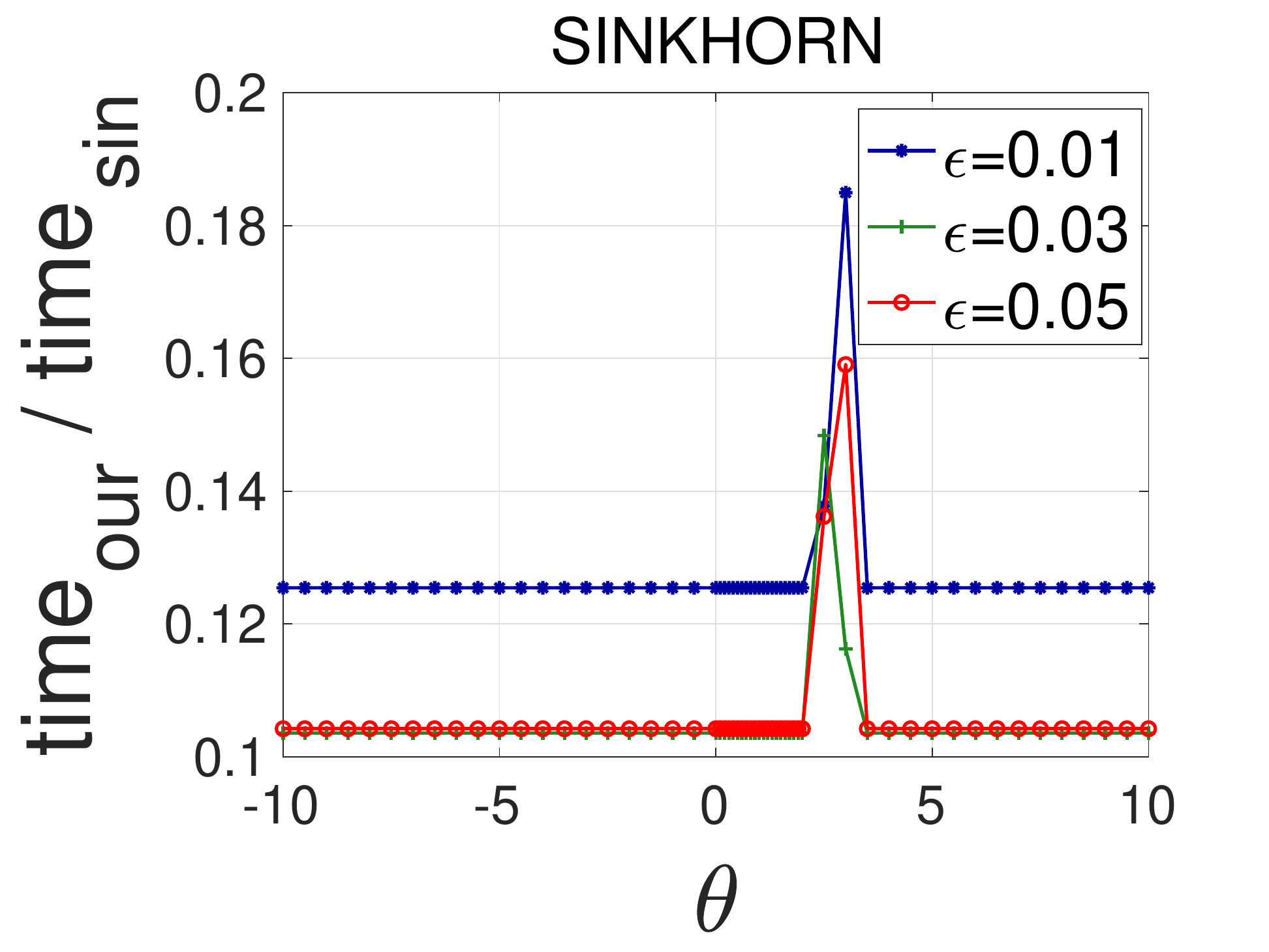}
\end{minipage}
}%
\subfigure[]{
\begin{minipage}[t]{0.23\linewidth}
\centering
\includegraphics[width=1.35in]{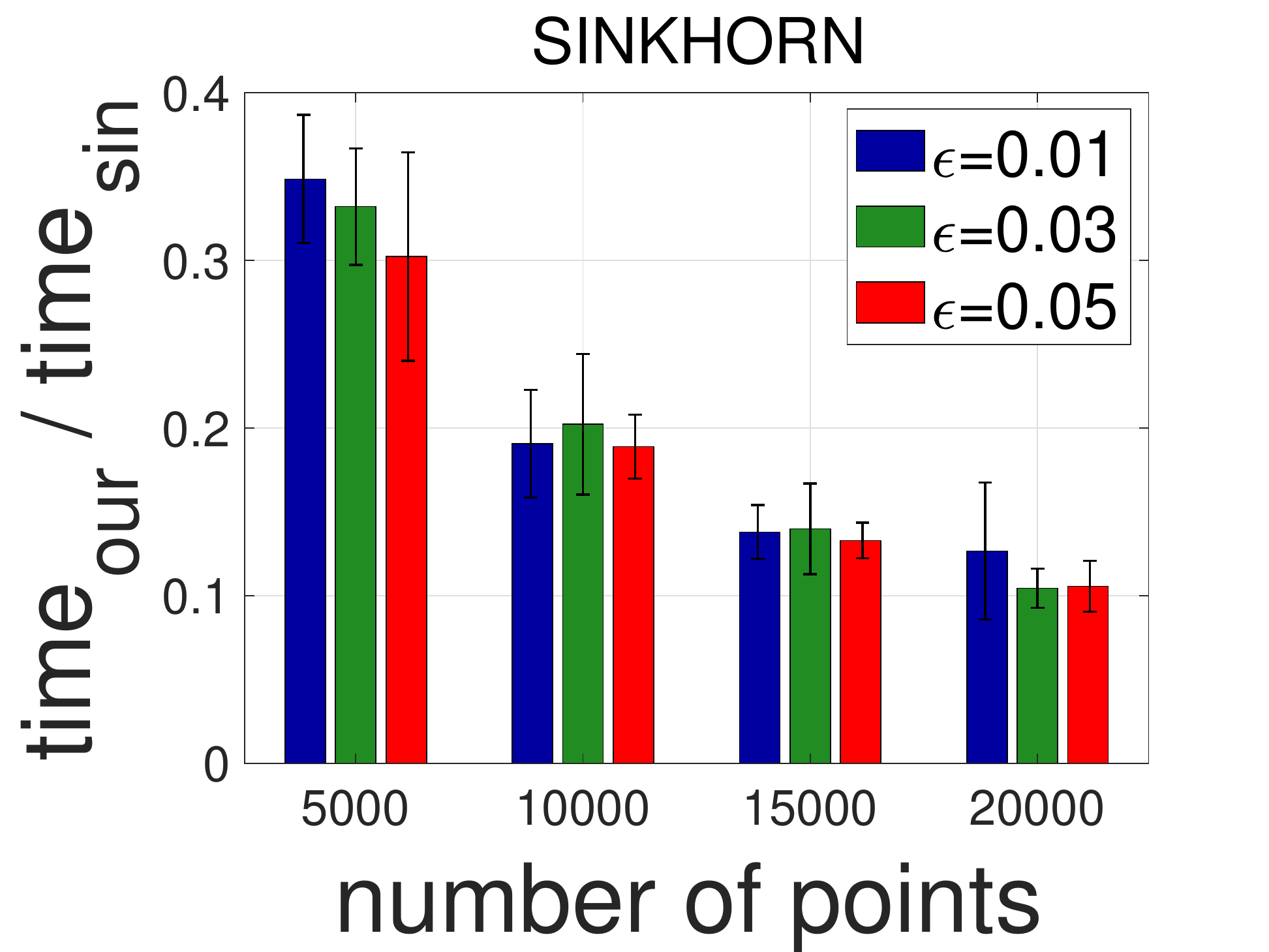}
\end{minipage}
}%
\centering
 \caption{The running time ratios on the CIFAR-10 dataset with varying the threshold $T=2^\theta\cdot \mathcal{EMD}(A, B)$ and the number of points $n=|A|+|B|$.}
  \label{fig-cifar}
\end{figure}


 \begin{figure}[htb]
\centering
\subfigure[$n=20,000$]{
\begin{minipage}[t]{0.4\linewidth}
\centering
\includegraphics[width=1.5in]{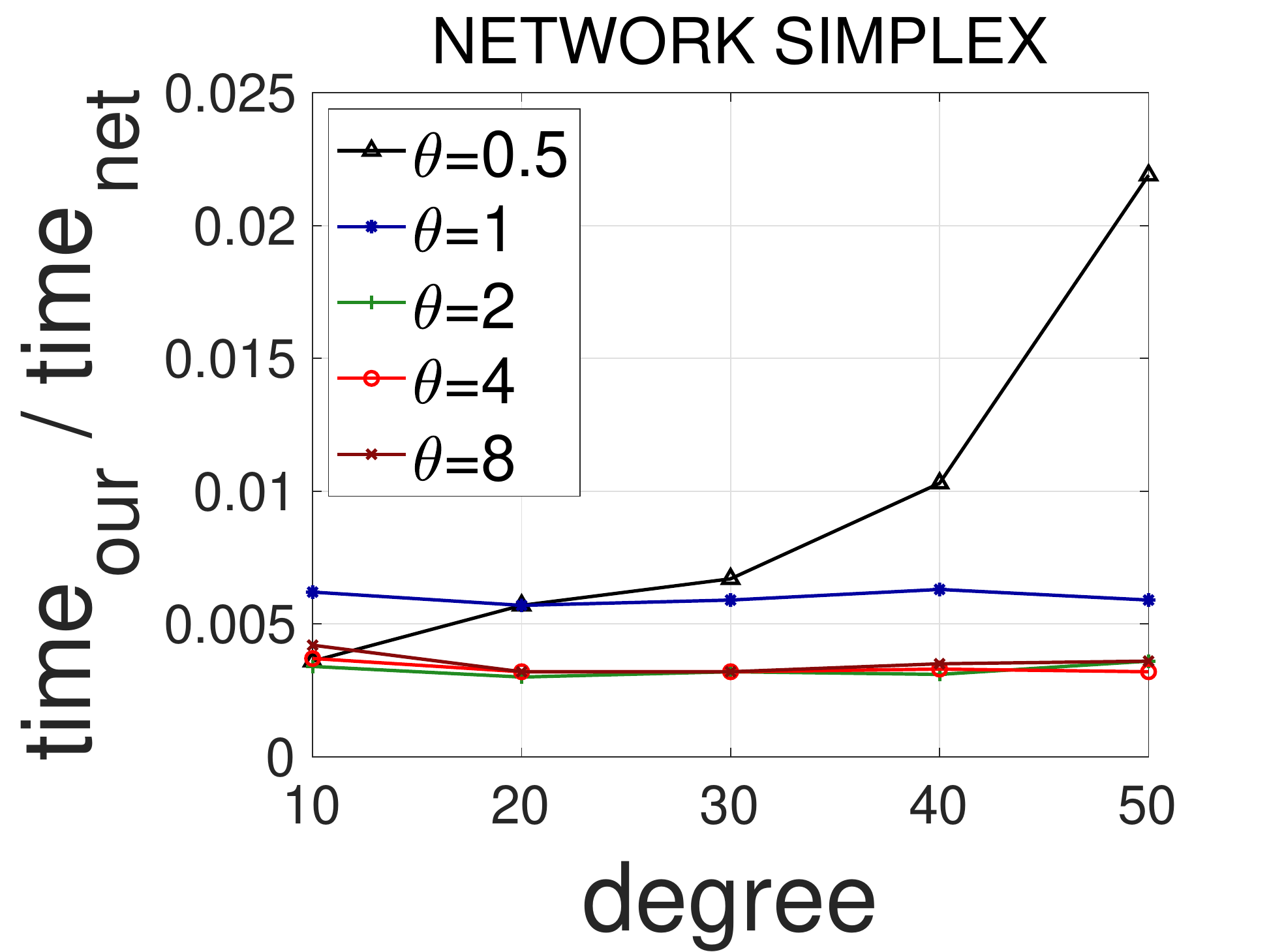}
\end{minipage}%
}%
\subfigure[$n=20,000$]{
\begin{minipage}[t]{0.4\linewidth}
\centering
\includegraphics[width=1.5in]{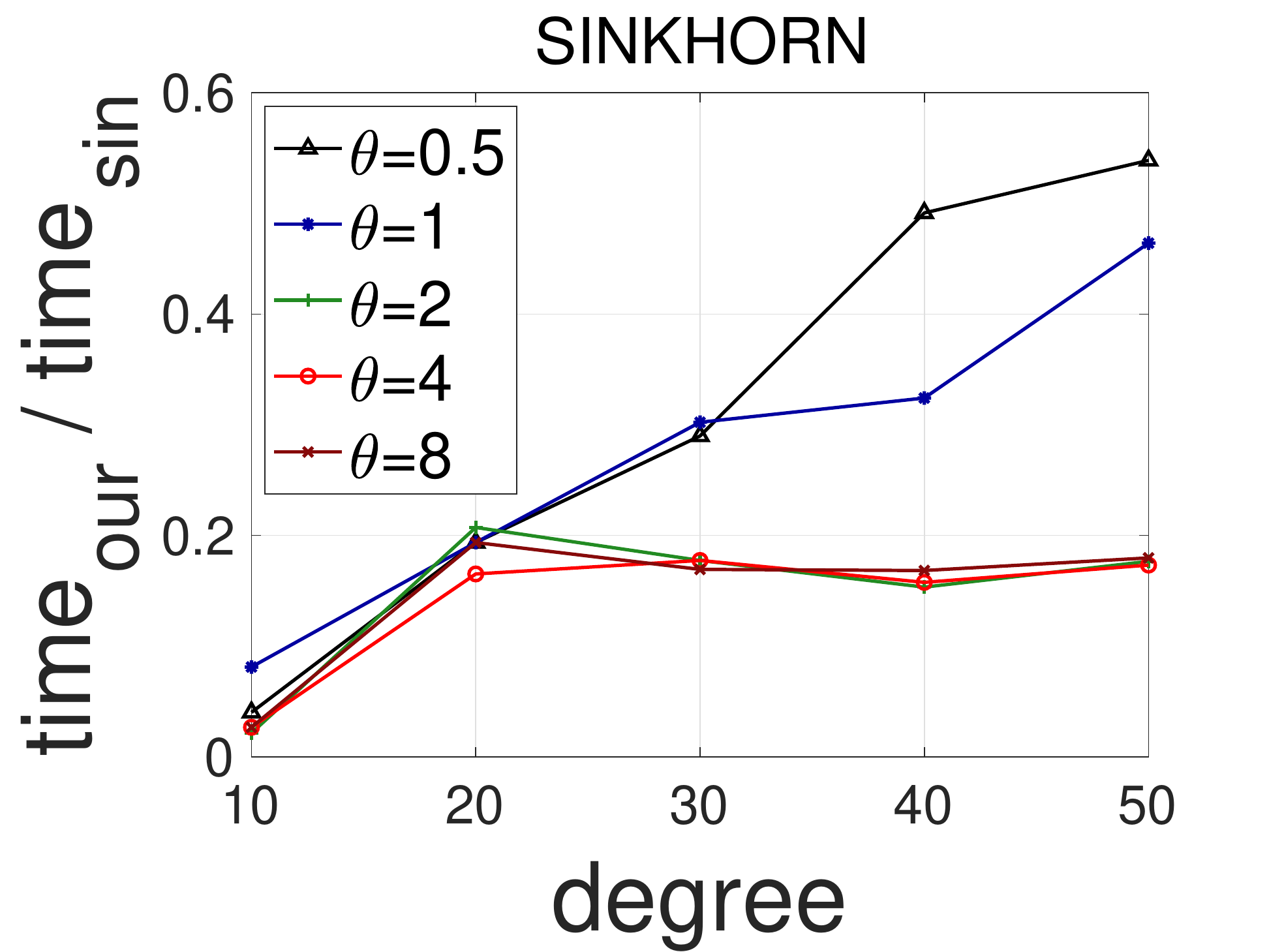}
\end{minipage}
}%
\centering
 \caption{The running time ratios on the synthetic datasets with varying the degree.}
  \label{fig-synid}
\end{figure}

We also compute the precisions of our method on the synthetic datasets and real datasets, where the precision measures the frequency that our method returns correct results with respect to the three cases defined in Section~\ref{sec-match}. Our method can achieve the average precision $\geq$ $97.6\%$ on the synthetic datasets  and $\geq 88.8\%$ on the real datasets over all the instances.

\vspace{-0.08in}
\section{Future Work}
\vspace{-0.08in}
In this paper, we propose a novel data-dependent algorithm for fast solving the EMD query problem. Our algorithm enjoys several advantages in practice. For example, it is very easy to implement and any existing EMD algorithm can be plugged as the black box to the framework. 
Following this work, it is interesting to consider generalizing our method for other measures instead of EMD ({\em e.g.,} Kullback–Leibler divergence).


%
%
%
%

\bibliographystyle{abbrv}

\bibliography{nips_2017}

\section{Proof of Claim \ref{cla-hemd}}

\begin{figure}[hbtp]
\begin{center}
    \includegraphics[width=0.49\textwidth]{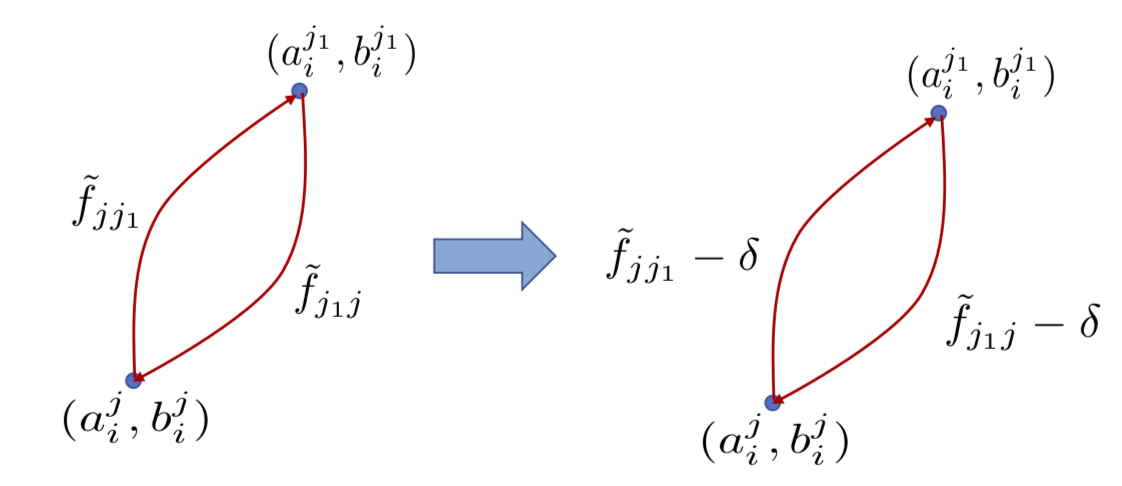}  
 %
    \includegraphics[width=0.49\textwidth]{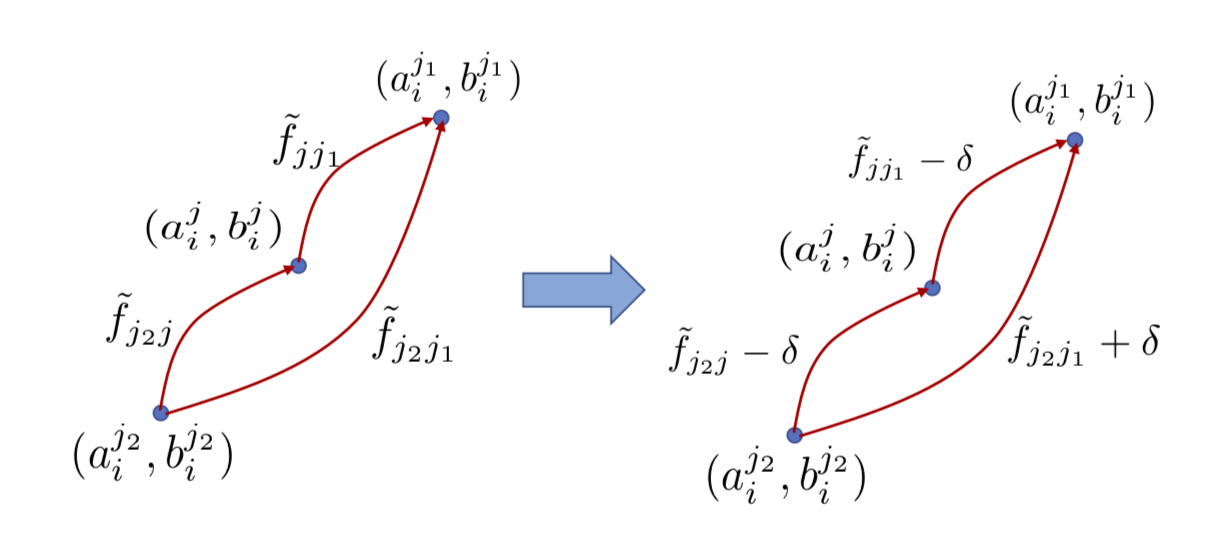}  
    \end{center}
  \caption{The illustrations for case 1 (the left figure) and  case 2 (the right figure).}     
   \label{fig-flow}
\end{figure}

Suppose that there exists a flow $\tilde{f}_{jj}\neq\min\{n^j_i, m^j_i\}$, {\em i.e.,} $\tilde{f}_{jj}<\min\{n^j_i, m^j_i\}$ (note that $\tilde{f}_{jj}$ cannot be larger than $\min\{n^j_i, m^j_i\}$). 
Then, there must exist another flow $\tilde{f}_{jj_1}>0$ from the point $a^j_i$. Then, we consider two cases: (1) $\tilde{f}_{j_1 j}>0$ and (2) $\tilde{f}_{j_1 j}=0$, where $\tilde{f}_{j_1 j}$ is the flow from $a^{j_1}_i$ to $b^j_i$. 

For case (1), let $\delta=\min\{\tilde{f}_{jj_1}, \tilde{f}_{j_1 j}\}$. We replace the flows $\tilde{f}_{jj}, \tilde{f}_{jj_1}$, $\tilde{f}_{j_1 j}$, and $\tilde{f}_{j_1 j_1}$, by $\tilde{f}_{jj}+\delta$, $\tilde{f}_{j j_1}-\delta$, $\tilde{f}_{j_1 j}-\delta$, and $\tilde{f}_{j_1 j_1}+\delta$, respectively (see the left figure in Figure~\ref{fig-flow}). It is easy to know that the new flows are still feasible and the matching cost is reduced by $2\delta ||p_{v^j_i}-p_{v^{j_1}_i}||\geq 0$. Also, either $\tilde{f}_{j j_1}-\delta$ or $\tilde{f}_{j_1 j}-\delta$ is equal to $0$.

For case (2), since $\tilde{f}_{j_1 j}=0$ and $\tilde{f}_{jj}<\min\{n^j_i, m^j_i\}$, there must exist a flow $\tilde{f}_{j_2 j}>0$ with $j_2\neq j_1$ (otherwise, the total flow received in $b^j_i$ is only $\tilde{f}_{jj}$ that is less than $m^j_i$). Let $\delta=\min\{\tilde{f}_{jj_1}, \tilde{f}_{j_2 j}\}$. We replace the flows $\tilde{f}_{jj}, \tilde{f}_{jj_1}$, $\tilde{f}_{j_2 j}$, and $\tilde{f}_{j_2 j_1}$, by $\tilde{f}_{jj}+\delta$, $\tilde{f}_{jj_1}-\delta$, $\tilde{f}_{j_2 j}-\delta$, and $\tilde{f}_{j_2 j_1}+\delta$, respectively (see the right figure in Figure~\ref{fig-flow}). It is easy to know that the new flows are still feasible and the matching cost is reduced by
\begin{eqnarray}
\delta (||p_{v^{j_1}_i}-p_{v^j_i}||+||p_{v^{j_2}_i}-p_{v^j_i}||-||p_{v^{j_1}_i}-p_{v^{j_2}_i}||)\geq 0
\end{eqnarray}
via the triangle inequality. Also, either $\tilde{f}_{j j_1}-\delta$ or $\tilde{f}_{j_2 j}-\delta$ is equal to $0$.

Overall, for both cases, we can always transform some non-zero flow to be $0$ without increasing the matching cost. Thus, after a finite number of steps, there should be no $\tilde{f}_{jj}\neq\min\{n^j_i, m^j_i\}$; that is,  $\tilde{f}_{jj}$ is equal to $\min\{n^j_i, m^j_i\}$ for any $1\leq j\leq N$.

\end{document}